\documentclass[12pt]{article}

\usepackage{amsmath,amssymb,amsfonts}
\usepackage{amsthm}
\usepackage{algorithm}
\usepackage{algorithmic}
\usepackage{booktabs}
\usepackage{array}
\usepackage{multirow}
\usepackage{url}
\usepackage{graphicx}
\usepackage{float}
\usepackage{placeins}
\usepackage{geometry}
\usepackage{hyperref}
\usepackage{adjustbox}

\newtheorem{theorem}{Theorem}
\newtheorem{lemma}{Lemma}
\newtheorem{proposition}{Proposition}
\newtheorem{corollary}{Corollary}
\newtheorem{definition}{Definition}
\newtheorem{remark}{Remark}
\newtheorem{example}{Example}

\newcommand{\Ht}{\mathcal{H}_t}
\newcommand{\Bt}{\mathcal{B}_t}
\newcommand{\D}{\mathcal{D}}
\newcommand{\Fv}{F_V}
\newcommand{\Fe}{F_E}
\newcommand{\T}{\mathcal{T}}
\newcommand{\Cset}{\mathcal{C}}
\newcommand{\Kcomp}{\mathcal{K}}
\newcommand{\dist}{\operatorname{dist}}
\newcommand{\can}{\operatorname{can}}

\newcommand{\ecc}{\operatorname{ecc}}

\newcommand{\rhoEJ}{\rho}

\title{Re-Rooting-Assisted Edge-Minimum Runtime Repair for Node and Link Failures in Dense Eisenstein--Jacobi Broadcast Networks}

\author{Bader A. Albader\\
\small Department of Computer Science, Faculty of Science, Kuwait University, Kuwait\\
\small \texttt{albader@cs.ku.edu.kw}}

\date{}

\begin{document}
\maketitle

\begin{abstract}
One-to-all broadcasting in dense Eisenstein--Jacobi (EJ) networks relies on diameter-level spanning trees that fragment when nodes or links fail. This paper introduces the selected triple $(r,\theta,\Kcomp_{r,\theta})$--a chosen root, a chosen EJ coordinate-reduction orientation, and the healthy component graph induced by that choice--as the fundamental unit of analysis for joint node/link fault recovery. The central result is a necessary and sufficient condition: hybrid repair succeeds if and only if the healthy EJ graph $G'=\Ht-\Fv-\Fe$ is connected. When $G'$ is connected, a spanning tree of $\Kcomp_{r,\theta}$ maps to exactly $c-1$ component-crossing repair edges, which is minimum for the selected pruned tree. Deterministic guarantees include: one/two faulty nodes are always placed on the distance-$t$ boundary by re-rooting; a single failed link is either avoided or repaired by exactly one crossing edge; and the repaired depth satisfies $D_{r,\theta}\le 2t+1$ under shallowest-layer entry selection. A 260,000-trial validation campaign confirms 100\% recovery and substantial repair-edge reduction over fixed-source repair across five network scales up to $N=120601$ nodes, while global-BFS, near-miss, and cap-sensitivity audits clarify the tradeoff between reachability, forwarding-state changes, and ranked root selection.
\end{abstract}

\noindent\textbf{Keywords:} Eisenstein--Jacobi networks, hexagonal networks, fault-tolerant broadcasting, re-rooting, link failures, component repair, edge-minimum repair, interconnection networks, Network-on-Chip, runtime recovery.

\section{Introduction}
One-to-all broadcasting is a core collective operation in parallel computers, many-core systems, Network-on-Chip fabrics, and distributed accelerators. Dense Eisenstein--Jacobi (EJ) networks are attractive for such communication because they have degree six, small diameter, vertex symmetry, and a regular hexagonal coordinate geometry. The fault-free EJ broadcast reaches all nodes in one network diameter by expanding through six directional sectors. However, if a faulty node or failed link appears at an internal forwarding position, the fault-free tree can fragment and downstream nodes may fail to receive the broadcast.

Two existing EJ mechanisms motivate the present work. First, source re-rooting can relocate the effective source so that one or two faulty nodes lie on the graph-distance-$t$ boundary of the new source; in the standard EJ broadcast, boundary nodes are leaves and do not forward. Second, component repair can reconnect a fault-pruned EJ broadcast tree by contracting its healthy components and adding component-crossing edges. These mechanisms solve different problems. Re-rooting alone does not handle failed links or residual component cuts. Component repair alone repairs a chosen tree, but does not explain how to choose a source and orientation that reduce the damage before repair.

This paper studies the combined object
\begin{equation}
(r,\theta,\Kcomp_{r,\theta}),
\end{equation}
where $r$ is the selected effective source, $\theta$ is a selected EJ coordinate-reduction orientation, and $\Kcomp_{r,\theta}$ is the healthy component graph induced after node and link pruning. The optimization is therefore not only to find a valid re-rooted source or to repair a fixed tree. It is to choose a root--orientation pair that reduces the residual component structure and then repair that structure with the minimum possible number of external component-crossing edges.

\subsection{Novelty Separation}
Table~\ref{tab:novelty-separation} separates this paper from the two direct EJ ingredients. This separation is important because the present paper should not be read as merely repeating the one/two-node boundary theorem or the $c-1$ component theorem. Those are foundations. The new problem is joint node/link/runtime recovery by selecting the coupled unit $(r,\theta,\Kcomp_{r,\theta})$.

\begin{table}[H]
\centering
\caption{Novelty separation from the two EJ foundation mechanisms.}
\label{tab:novelty-separation}
\begin{adjustbox}{max width=\textwidth}
\begin{tabular}{p{0.22\linewidth}p{0.34\linewidth}p{0.34\linewidth}}
\toprule
Ingredient & Solves & Not solved by that ingredient alone \\
\midrule
EJ re-rooting & Places one/two faulty nodes on the distance-$t$ boundary so they become leaves in the standard broadcast. & Failed links, runtime-discovered links, mixed node/link cuts, residual component fragmentation, and repair-edge minimization. \\
EJ-MOEM component repair & For a chosen EJ tree whose healthy component graph is connected, repairs $c$ components using exactly $c-1$ external crossing edges. & How to choose a source and orientation that reduce the damage before component repair; link-safety filtering and runtime discovery. \\
This paper & Selects a damage-reducing root--orientation pair, filters failed EJ links, repairs residual components with the $c-1$ optimum, and characterizes recovery by healthy-graph connectivity. & Global optimization over all possible roots when the ranked/capped selector is used; tighter high-probability connectivity bounds for $G'$ under non-uniform or adversarial fault distributions. \\
\bottomrule
\end{tabular}
\end{adjustbox}
\end{table}

\subsection{Contributions}
The contributions are as follows.
\begin{itemize}
    \item[C1.] We introduce the selected triple $(r,\theta,\Kcomp_{r,\theta})$ and prove that hybrid recovery succeeds if and only if the healthy EJ graph $G'$ is connected---a necessary and sufficient characterization that subsumes node, link, mixed, and runtime faults.
    \item[C2.] We prove the EJ link-exclusion test: whether a selected broadcast tree uses a given failed link reduces to two coordinate parent comparisons, costing $O(|\Fe|)$ per candidate without constructing the full tree.
    \item[C3.] We prove edge-minimum component repair: when $\Kcomp_{r,\theta}$ is connected, exactly $c-1$ crossing edges are necessary and sufficient, and deterministic single-link repair follows because no EJ link is a bridge.
    \item[C4.] We prove a repaired-depth bound $D_{r,\theta}\le 2t+1$ under shallowest-layer entry selection, converting the near-diameter depth observed in deterministic regimes into a theorem.
    \item[C5.] We validate the framework over 260,000 trials at five network scales with a global-BFS rebuild audit, a cap-sensitivity audit, and a near-miss audit, demonstrating 100\% recovery and quantifying the structural repair-edge advantage over fixed-source repair.
\end{itemize}

\subsection{Scope}
The deterministic guarantees in this paper are deliberately limited but explicit. One/two-node boundary re-rooting is guaranteed. Single-link repair is guaranteed after a root--orientation pair is selected. The $c-1$ repair theorem is exact whenever the selected component graph is connected. For higher-order node/link faults, success is governed by healthy-graph connectivity: if the healthy EJ graph remains connected, the component graph of every selected pruned tree is connected and the hybrid repair succeeds; if the healthy graph is disconnected, no broadcast tree over all healthy vertices can exist. The ranked/capped part of the implementation affects which successful root--orientation pair is chosen and therefore the observed repair-edge count and depth, not the existence of a successful repair under healthy-graph connectivity, because the original healthy source and its orientations are pinned as fallback candidates.

\section{Related Work}
Algebraic interconnection networks use group, ring, or Cayley-graph structure to obtain symmetry, compact routing descriptions, and low-degree regularity. General treatments of interconnection networks, collective communication, and routing appear in standard texts on parallel algorithms and network design~\cite{Leighton1992,DallyTowles2004,Duato2003,Grama2003}. Dense EJ networks belong to this family: their degree-six adjacency comes from the six unit directions of the triangular/EJ lattice, and the finite quotient gives a compact hexagonal topology with small diameter~\cite{FlahiveBose2010,MartinezEJ2008}. Hexagonal mesh and HARTS-style systems motivated early addressing, routing, and reliable broadcast algorithms in six-neighbor networks~\cite{ChenShinKandlur1990,KandlurShin1991}, while later EJ formulations supplied the algebraic model used here~\cite{AlbaderBoseFlahive2012}.

Dense EJ networks belong to the broader family of algebraic quotient and circulant interconnection networks, where vertex transitivity and compact generator sets support concise routing and broadcasting rules~\cite{Akers1989,FlahiveBose2010,MartinezEJ2008}. The present paper does not rely on results from other quotient-lattice families; it is self-contained for the degree-six EJ topology. The key recovery mechanisms--the distance-$t$ boundary re-rooting lemma, the two-comparison link-exclusion test, and the six-neighbor local obstruction model--all use the triangular EJ coordinate system and are proved directly in this manuscript.

Fault-tolerant broadcasting has also been studied in broader distributed and parallel-computer settings. Classical work considered time bounds and impossibility tradeoffs for fault-tolerant broadcast in distributed networks~\cite{PelegSchafer1989,HadzilacosToueg1994}. Hypercube fault-tolerant routing and broadcasting developed another line of results in which a structured topology is preserved while faulty nodes or links are bypassed~\cite{LeeHayes1988}. Independent or completely independent spanning trees provide tree diversity by precomputing multiple delivery structures, and surveys summarize this protection-oriented approach~\cite{ChengSurvey2023}. Network-on-Chip fault tolerance similarly includes adaptive routing, region avoidance, spare-path use, virtual-channel based rerouting, and topology-agnostic deadlock-free schemes~\cite{Kliazovich2013,PasrichaDutt2008,FlichBertozzi2010,Stroobant2018}.

The present work occupies a different point in this design space. It does not precompute several complete trees, duplicate every broadcast along multiple paths, or optimize arbitrary packet traffic under background load. Instead, it starts from a deterministic EJ diameter-level broadcast tree after a small fault set is known, selects a root--orientation pair to reduce the damaged component structure, and adds the minimum number of external component-crossing edges for the selected pruned tree. The external global-BFS rebuild baseline is intentionally included to make this distinction clear: BFS rebuild can restore reachability with small depth, but it may replace many parent pointers and therefore does not optimize local component-preserving repair state.

To our knowledge, no prior published method addresses joint root--orientation selection for component-minimizing repair of EJ broadcast trees under mixed node and link faults. The closest EJ-specific foundations are node-only boundary re-rooting and multi-orientation component repair. Those works solve disjoint subproblems; the present manuscript is self-contained and restates the results it needs, using them as ingredients in the coupled triple $(r,\theta,\Kcomp_{r,\theta})$. The irreducible contribution of this paper is therefore not a new component-repair certificate alone nor a new re-rooting guarantee alone, but the proof that selecting the coupled triple $(r,\theta,\Kcomp_{r,\theta})$ gives an exact recovery condition, together with the link-exclusion test that makes this selection computationally tractable. Boundary re-rooting by itself does not address link faults or characterize when repair fails; fixed-source component repair by itself does not prove that the component graph of every selected candidate is connected whenever the healthy graph is connected. The healthy-graph connectivity theorem closes both gaps simultaneously. The global-BFS rebuild baseline included in the experiments is therefore not a competing EJ broadcast-repair method in the same objective class. It is an intentionally strong external reachability baseline that shows the cost of discarding healthy tree structure entirely, making the repair-edge reduction advantage of the hybrid method interpretable.

Reliable broadcast algorithms for HARTS-style hexagonal meshes address faulty processors by storing alternate paths and recovery rules in the network~\cite{KandlurShin1991}. Such protection-oriented methods improve robustness but maintain redundancy before faults are known. In contrast, the hybrid method stores a constant family of EJ orientation rules and inserts only $O(c-1)$ exceptional forwarding rules after the fault set is observed. For the single-fault case, boundary re-rooting or single-link repair requires zero or one exceptional rule, whereas redundant path methods reserve alternate state regardless of whether it is needed.

Independent spanning tree methods precompute multiple complete spanning trees of the fault-free graph, providing tree diversity before faults occur~\cite{ChengSurvey2023}. They are complementary but solve a different optimization problem: an IST is fixed before the fault set is known, whereas the hybrid method chooses a root--orientation pair after faults are observed. A fault placement that creates many components in one precomputed tree may create fewer components in a differently rooted EJ orientation. The hybrid method exploits this post-fault selection freedom and then proves $c-1$ repair optimality for the selected pruned tree.

\section{Dense EJ Network Model}
Let
\begin{equation}
\omega=\frac{-1+i\sqrt{3}}{2}.
\end{equation}
An EJ integer is written as $x+y\omega$ and represented by the axial coordinate pair $(x,y)\in\mathbb{Z}^2$. The six unit directions are
\begin{equation}
\D=\{(1,0),(0,1),(1,-1),(-1,0),(0,-1),(-1,1)\}.
\end{equation}
The EJ distance from the origin is
\begin{equation}
\rhoEJ(x,y)=\max\{|x|,|y|,|x+y|\}.
\end{equation}
The radius-$t$ hexagonal ball is
\begin{equation}
\Ht=\{(x,y): |x|\le t,\ |y|\le t,\ |x+y|\le t\}.
\end{equation}
It contains
\begin{equation}
N=3t^2+3t+1
\end{equation}
vertices. The dense EJ network considered in this paper is generated by
\begin{equation}
\alpha=(t+1)+t\omega,
\end{equation}
and is the quotient of the EJ lattice modulo $\alpha$. The integer label associated with $(x,y)$ is
\begin{equation}
\phi(x+y\omega)\equiv tx-(t+1)y\pmod N.
\end{equation}
Thus unit axial moves correspond to the six degree-six circulant jumps
\begin{equation}
\pm t,\quad \pm(t+1),\quad \pm(2t+1)
\end{equation}
under the chosen convention.

For a node $a$ and offset $z\in\Ht$, the translated node is denoted by $a+z$ in the quotient. The relative coordinate of node $v$ with respect to root $r$ is
\begin{equation}
\Delta_r(v)=\can(v-r)\in\Ht,
\end{equation}
where $\can$ returns the canonical axial representative in the radius-$t$ hexagon.

\begin{figure}[H]
\centering
\includegraphics[width=0.6\linewidth]{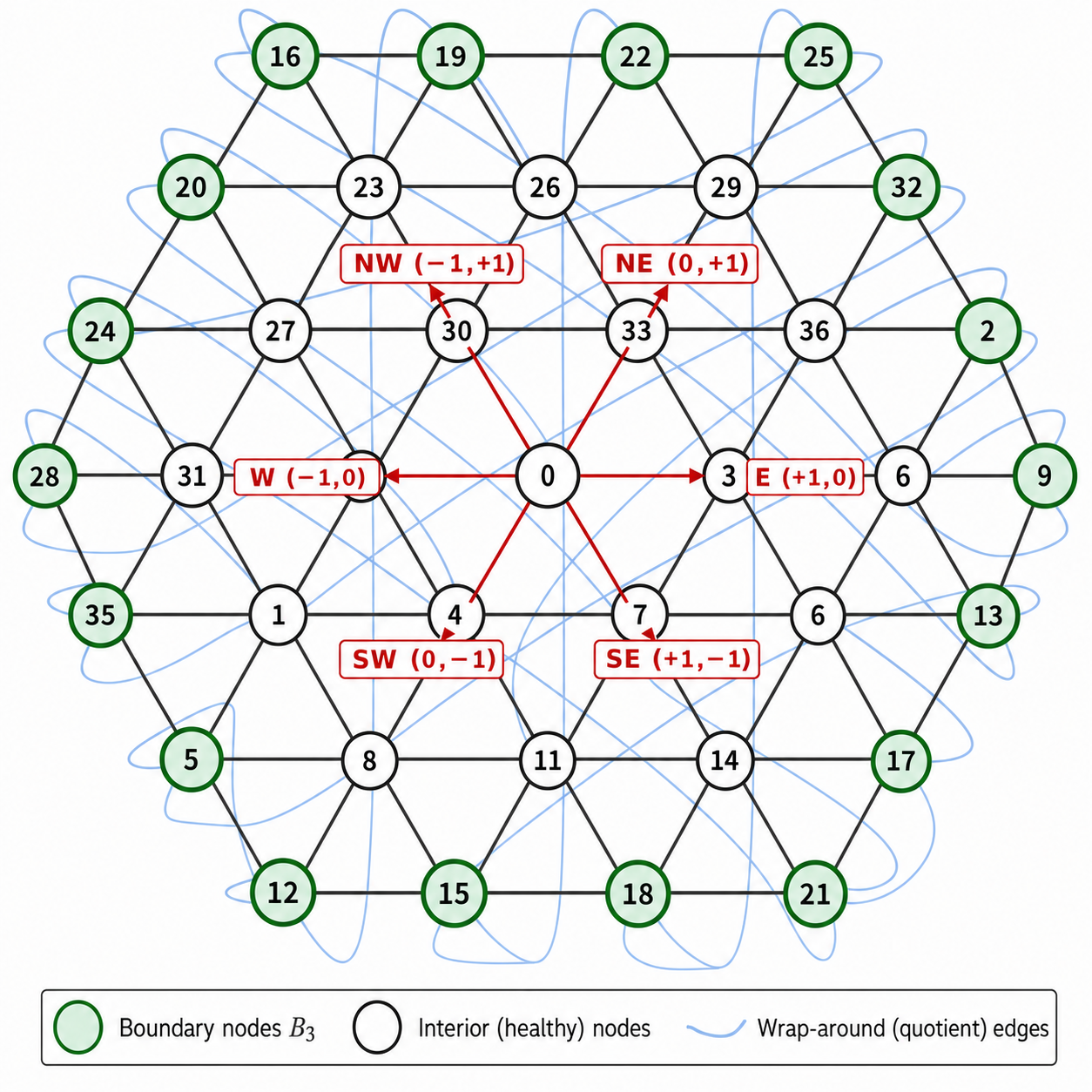}
\caption{Dense EJ network for $t=3$ ($n=4$, $N=37$) represented as an axial hexagon. Boundary nodes satisfy $\rho(x,y)=3$ and are highlighted. Integer labels use $\phi(x+y\omega)=3x-4y \pmod{37}$, and the six annotated directions correspond to the six unit EJ moves.}
\label{fig:ej-hexagon}
\end{figure}

\section{Rooted EJ Broadcast Orientations}
A coordinate-reduction orientation chooses one parent for every non-root vertex by selecting an inward neighbor whose EJ layer is one smaller.

\begin{definition}[EJ coordinate-reduction parent rule]
Let $\theta=(d_1,d_2,\ldots,d_6)$ be an ordered list of the six EJ unit directions. For a nonzero coordinate $z\in\Ht$, define
\begin{equation}
 p_\theta(z)=z+d_j,
\end{equation}
where $d_j$ is the first direction in $\theta$ satisfying
\begin{equation}
\rhoEJ(z+d_j)=\rhoEJ(z)-1.
\end{equation}
\end{definition}

For a selected root $r$, the tree $\T_{r,\theta}$ contains the directed parent edge
\begin{equation}
 r+p_\theta(\Delta_r(v))\longrightarrow v
\end{equation}
for every non-root node $v$ before faults are applied.

\begin{lemma}[Fault-free EJ orientation tree]
For any root $r$ and any orientation $\theta$ that assigns to every non-root coordinate an inward parent, $\T_{r,\theta}$ is a spanning tree of the dense EJ network rooted at $r$ with depth at most $t$.
\end{lemma}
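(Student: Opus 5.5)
The plan is to reduce the statement to the layered structure of the radius-$t$ hexagon and then transport it to the quotient by translation. Working in relative coordinates, let $z=\Delta_r(v)\in\Ht$ for a non-root node $v$. By hypothesis $\theta$ assigns an inward parent, so $p_\theta(z)=z+d_j$ with $\rhoEJ(p_\theta(z))=\rhoEJ(z)-1$. I would first check that such a $d_j$ always exists for $z\neq 0$, so the hypothesis is automatically satisfied. Write $\rhoEJ(z)=\max\{|x|,|y|,|x+y|\}=k\ge 1$. Going through the sign pattern of $(x,y)$ (the six sectors), one of the unit moves $(-1,0),(0,-1),(1,-1),\dots$ decreases every coordinate of maximal absolute value among $x,y,x+y$ by one without increasing the others beyond $k-1$. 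For example, if $x>0,y\ge 0$ then $(-1,0)$ or $(0,-1)$ works, and if $x>0>y$ with $x+y\ge0$ then $(-1,0)$ or $(-1,1)$ works. Since $\rhoEJ$ is a norm-like function changing by at most $1$ along a unit move, the target lies at layer exactly $k-1$. It also lies in $\Ht$, because $\Ht$ is exactly the set with $\rhoEJ\le t$.

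Next I would verify that the parent map is well defined in the quotient. The parent of $v$ is the node $r+p_\theta(z)$, and it is adjacent to $v=r+z$ because the two differ by the unit direction $-d_j$. The canonical representative of $r+p_\theta(z)-r$ is $p_\theta(z)$ itself, since $p_\theta(z)\in\Ht$ and $\can$ is a bijection $\Ht\to\mathbb{Z}[\omega]/\alpha$; this uses $|\Ht|=N$ and that $\Ht$ is a complete residue system. Hence $\rhoEJ(\Delta_r(\text{parent}(v)))=\rhoEJ(\Delta_r(v))-1$. The key point is that the layer function $\ell(v)=\rhoEJ(\Delta_r(v))$ is well defined on nodes and strictly decreases along parent pointers.

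Then the tree properties follow. Every non-root node has exactly one parent, and the root is the unique node with $\ell=0$. Iterating the parent map from $v$ strictly decreases $\ell$, so after exactly $\ell(v)\le t$ steps it reaches layer $0$, which is $r$. Thus the $N-1$ parent edges connect all $N$ nodes, and no cycles can occur because $\ell$ is strictly monotone along any directed parent path. A connected graph on $N$ vertices with $N-1$ edges is a spanning tree. The depth of $v$ equals $\ell(v)\le t$, giving depth at most $t$, with equality attained on the boundary.

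I expect the main obstacle to be the existence and well-definedness step. One part is the sector case check that an inward unit move always exists. The other is the claim that canonical representatives commute with the translation, so that the layer computed in $\Ht$ matches the quotient node's relative coordinate. I would handle the latter by citing that $\Ht$ is a fundamental domain of modulus $\alpha$, which is implicit in the paper's model. I would treat the former by the symmetry of $\D$ under rotation by $\omega$, which reduces the check to one sector.
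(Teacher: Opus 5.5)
Your proposal is correct and follows the paper's argument: the layer $\rho(\Delta_r(v))$ drops by exactly one along each parent edge, so iterating parents from any $v$ reaches the unique layer-zero node $r$ in $\rho(\Delta_r(v))\le t$ steps, which gives connectivity, acyclicity, and the depth bound. Your additional checks make explicit what the paper's short proof takes for granted: that an inward unit move always exists, that $\Delta_r$ of the parent equals $p_\theta(\Delta_r(v))$ because $\Ht$ is a complete residue system, and the $N-1$ edge count.
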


\begin{proof}
Every non-root vertex has exactly one parent. Along each parent edge, the layer $\rhoEJ(\Delta_r(v))$ decreases by one. Hence directed cycles are impossible and repeated parent application reaches the unique layer-zero coordinate, namely the root. Since every canonical coordinate has layer at most $t$, the maximum root-to-node depth is at most $t$.
\end{proof}

\subsection{Orientation Family}
The hybrid framework can use either one fixed EJ broadcast orientation or a constant-size family. For compatibility with EJ-MOEM, we use the 15-orientation family. Let the six directions be indexed cyclically by $\delta_0,\ldots,\delta_5$. Define
\begin{align}
C_i&=(\delta_i,\delta_{i+1},\delta_{i+2},\delta_{i+3},\delta_{i+4},\delta_{i+5}),\quad 0\le i<6,\\
R_i&=(\delta_i,\delta_{i-1},\delta_{i-2},\delta_{i-3},\delta_{i-4},\delta_{i-5}),\quad 0\le i<6,\\
A_i&=(\delta_i,\delta_{i+3},\delta_{i+1},\delta_{i+4},\delta_{i+2},\delta_{i+5}),\quad 0\le i<3,
\end{align}
with indices modulo six. The selected family affects repair quality and depth, not the graph-theoretic optimality theorem for a chosen tree.

\section{Fault Model and Recovery Objective}
Let $\Fv\subseteq V$ be the faulty-node set and $\Fe\subseteq E$ the failed-link set. The original source is assumed healthy. A recovered broadcast must reach every node in $V\setminus\Fv$ without using failed nodes or failed links.

For a selected root--orientation pair $(r,\theta)$, pruning removes faulty nodes, tree edges incident to faulty nodes, and failed tree links:
\begin{equation}
\T^-_{r,\theta}=\T_{r,\theta}-\Fv-\Fe.
\end{equation}
Let
\begin{equation}
\Cset_{r,\theta}=\{C_1,C_2,\ldots,C_c\}
\end{equation}
be the connected components of $\T^-_{r,\theta}$ on healthy vertices.

\begin{definition}[Healthy component graph]
The healthy component graph $\Kcomp_{r,\theta}$ has one vertex for each component $C_i$. Two component vertices $C_i$ and $C_j$ are adjacent if there exists a healthy EJ graph edge $uv\in E\setminus\Fe$ with $u\in C_i$, $v\in C_j$, and $u,v\notin\Fv$.
\end{definition}

For a successful candidate, let $\rho_{r,\theta}$ be the number of external component-crossing repair edges, $D_{r,\theta}$ the repaired depth, and $M_{r,\theta}$ the number of inspected root/orientation candidates. The hybrid selector used in this paper orders valid candidates lexicographically as
\begin{equation}
\min (\rho_{r,\theta},D_{r,\theta},M_{r,\theta}).
\end{equation}
Thus the primary metric is exceptional forwarding state, while repaired depth is reported separately as a latency-related metric. This paper does not evaluate a separate weighted latency objective; avoiding an unevaluated objective prevents confusion between the proven edge-minimum repair certificate and possible future latency-dominant engineering choices.

An \emph{avoid-only} candidate is one for which $\lambda(r,\theta,\Fe)=0$ and every node in $\Fv$ that can be neutralized by re-rooting satisfies $\dist(r,f)=t$. Equivalently, the selected tree neither uses any failed link nor has a neutralized faulty node at an internal forwarding position. In that case zero component-crossing repair edges are needed, and the candidate succeeds without invoking the component-repair phase.

\section{EJ Re-Rooting Preliminaries}
The graph-distance-$t$ boundary of the origin is
\begin{equation}
\Bt=\{z\in \Ht:\rhoEJ(z)=t\}.
\end{equation}
The distance-$t$ boundary of a node $a$ is $a+\Bt$.

\begin{lemma}[EJ boundary-difference coverage]
For the dense EJ network of diameter $t$,
\begin{equation}
\Bt-\Bt=V.
\end{equation}
Equivalently, for every displacement $A\in V$, there exist $U,V\in\Bt$ such that $A=U-V$ in the EJ quotient.
\end{lemma}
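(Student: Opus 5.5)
The plan is to lift the problem from the quotient to the EJ lattice, prove the difference identity there, and then project back. Every displacement $A\in V$ has a canonical representative $\can(A)=(a,b)\in\Ht$ with $\rhoEJ(a,b)\le t$. It therefore suffices to find lattice points $U,W\in\mathbb{Z}^2$ with $\rhoEJ(U)=\rhoEJ(W)=t$ and $U-W=(a,b)$ exactly in $\mathbb{Z}[\omega]$. Both $U$ and $W$ then lie in $\Ht$ with layer exactly $t$, so they are canonical boundary representatives in $\Bt$. Reducing the identity modulo $\alpha$ gives $A=U-W$ in the quotient. The inclusion $\Bt-\Bt\subseteq V$ is trivial, so this proves the stated equality.

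The first step is a symmetry reduction. Multiplication by a unit $\omega^k$ (and by $-1$) is a $\mathbb{Z}$-linear automorphism of $\mathbb{Z}[\omega]$. It preserves the norm $\rhoEJ$ and hence maps $\Bt$ to itself. Because it is linear, it also commutes with taking differences. In axial coordinates the region $\{x\ge 0,\ y\ge 0\}$ is a $120^\circ$ cone spanned by $1$ and $\omega$. Its images under multiplication by $1,\omega,\omega^2$ cover the whole plane. So I may assume $a\ge 0$, $b\ge 0$. In this region $\rhoEJ(a,b)=a+b\le t$.

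The second step is an explicit construction in this sector:
\begin{equation}
U=(t,\,b-t),\qquad W=U-(a,b)=(t-a,\,-t).
\end{equation}
Next I check the two norms, using $0\le a,b$ and $a+b\le t$. For $U$, $\rhoEJ(U)=\max\{t,\,t-b,\,b\}=t$. For $W$, $\rhoEJ(W)=\max\{t-a,\,t,\,a\}=t$. Hence $U,W\in\Bt$ and $U-W=(a,b)$. Geometrically, this picks a point where the radius-$t$ hexagon boundaries centred at $0$ and at $-A$ meet. Such a point exists because the two centres are at distance at most $t\le 2t$. Undoing the unit rotation and then reducing modulo $\alpha$ finishes the argument. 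The case $A=0$ is covered by taking $U=W$.

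The main point needing care is the symmetry reduction. One has to confirm that the axial cone $x,y\ge 0$, rotated by $\omega$ and $\omega^2$, really covers all of $\Ht$. One also has to confirm that the rotated points $\omega^k U$ and $\omega^k W$ keep layer exactly $t$, so they remain canonical boundary coordinates rather than needing reduction. Both facts follow from the unit invariance of $\rhoEJ$. I would verify the cone covering directly from the sign patterns of $(x,y,x+y)$: every nonzero $(x,y)$ has a pair of consecutive coordinates among the cyclic triple $x,\,y,\,-(x+y)$ that are both nonnegative, and rotating by $\omega$ cyclically permutes this triple.
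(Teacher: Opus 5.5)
\textbf{Verdict: the approach is the paper's, but the symmetry reduction contains a false step that leaves three of the six sectors unproved; the fix is one line.}

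Your explicit pair is correct and is the paper's construction up to $z\mapsto -z$. For $A=(r,s)$ the paper uses $U=(r-t,t)$ and $V=(-t,t-s)$, and your $(t,b-t)$, $(t-a,-t)$ is exactly $(-V,-U)$.

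The gap is in the reduction to one cone. Under the metric $\rhoEJ(x,y)=\max\{|x|,|y|,|x+y|\}$, which your norm checks actually use, the cone $\{x\ge0,\ y\ge0\}$ is one of \emph{six} sectors, not three. On it $\rhoEJ=x+y$, so its boundary points $(k,t-k)$, $0\le k\le t$, form a single edge of the hexagon. The $\rhoEJ$-preserving rotation that cyclically permutes $(x,y,-(x+y))$, as you describe, is $(x,y)\mapsto(-x-y,x)$. Its three powers carry the cone exactly onto the points for which two of $x$, $y$, $-(x+y)$ are nonnegative.

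That set is not all of $\Ht$. The three numbers sum to zero, so two of them can be negative. For example, $(x,y)=(-1,-1)$ has triple $(-1,-1,2)$; it lies in $\Ht$ for $t\ge2$ and in none of your three cones. So both your covering claim and the verification you propose for it fail on three of the six sectors.

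The ``$120^\circ$ cone spanned by $1$ and $\omega$'' picture comes from reading $(x,y)$ literally as $x+y\omega$ with $\omega=e^{2\pi i/3}$. Under that reading, multiplication by $\omega$ is $(x,y)\mapsto(-y,x-y)$. This sends $(1,-1)$, of layer $1$, to $(1,2)$, of layer $3$, so it does not preserve this $\rhoEJ$. With the paper's $\D$ and $\rhoEJ$, the coordinates behave like the $60^\circ$ basis $\{1,1+\omega\}$. That is why the paper's proof handles one sector and then ``the remaining five.''

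The repair uses a symmetry you already listed. If two entries of $(x,y,-(x+y))$ are nonpositive, then $-A$ has two nonnegative entries. Your three-cone argument then gives $-A=U-W$ with $U,W\in\Bt$, hence $A=W-U$. Equivalently, let all six units $\pm1,\pm\omega,\pm\omega^2$ act, as the paper does. With this patch the rest of your argument goes through: build the pair in the lattice, rotate back with a $\rhoEJ$-preserving unit, reduce modulo $\alpha$, and treat $A=0$ trivially.
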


\begin{proof}
Every node has a canonical representative $A=(x,y)\in\Ht$. It is enough to show that every point of $\Ht$ is a difference of two boundary points. In the sector $x\ge0$, $y\ge0$, and $x+y\le t$, write $A=r+s\omega$ with $r,s\ge0$ and $r+s\le t$. Define
\begin{equation}
U=(r-t)+t\omega,
\qquad
V=-t+(t-s)\omega.
\end{equation}
Then $U-V=r+s\omega=A$. Moreover,
\begin{align}
\rhoEJ(r-t,t)&=\max\{t-r,t,r\}=t,\\
\rhoEJ(-t,t-s)&=\max\{t,t-s,s\}=t,
\end{align}
so $U,V\in\Bt$. The remaining five sectors follow by multiplication by EJ units, which preserve adjacency, distance, and the boundary. Therefore every displacement in $V$ is in $\Bt-\Bt$.
\end{proof}

\begin{theorem}[One/two-node EJ re-rooting]
Let $\Fv$ be a node-fault set with $|\Fv|\le2$. There exists a root $r$ such that
\begin{equation}
\dist(r,f)=t\quad \text{for every }f\in\Fv.
\end{equation}
\end{theorem}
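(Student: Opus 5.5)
The plan is to reduce the statement to the boundary-difference coverage lemma by translating everything into relative coordinates. In the dense EJ network the quotient is vertex-transitive under translation. Translation is an automorphism, so $\dist(r,f)=\rhoEJ(\can(f-r))$. Hence $\dist(r,f)=t$ holds exactly when $f-r\in\Bt$, that is, when $r\in f-\Bt$. The theorem therefore asks for a root lying in $\bigcap_{f\in\Fv}(f-\Bt)$. I will handle the three cases $|\Fv|=0,1,2$ separately.

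For $|\Fv|=0$ the condition is vacuous, and any root works; the original source is the natural choice. For $|\Fv|=1$ with $\Fv=\{f\}$, $\Bt$ is nonempty because $(t,0)\in\Bt$. Setting $r=f-(t,0)$ gives $\can(f-r)=(t,0)$, whose EJ norm is $t$. Here I need to note that $(t,0)$ is its own canonical representative, which holds since $(t,0)\in\Ht$.

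For $|\Fv|=2$ with $\Fv=\{f_1,f_2\}$ and $f_1\neq f_2$, let $A=f_1-f_2\in V$. By the boundary-difference coverage lemma there are $U,V'\in\Bt$ with $A=U-V'$. I will set $r=f_1-U$. Then $f_1-r=U\in\Bt$, and
\[
f_2-r=f_2-f_1+U=-A+U=V'\in\Bt .
\]
So both faulty nodes lie at distance exactly $t$ from $r$.

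The main point needing care is the identification $\dist(a,b)=\rhoEJ(\can(b-a))$ in the quotient. In other words, the canonical representative in $\Ht$ must realize the graph distance, and this requires that $\Ht$ is a complete residue system modulo $\alpha$ with no wrap-around shortcut. I would justify it as follows. The count $N=3t^2+3t+1=|\Ht|$ matches the norm of $\alpha$. Any lattice path of length $\ell$ from $0$ to a point congruent to $z$ ends at some $z'\equiv z$ with $\rhoEJ(z')\le\ell$. The representative in $\Ht$ minimizes $\rhoEJ$ over its class, since points of $\Ht$ have norm at most $t$ and distinct points of $\Ht$ are incongruent. A boundary point with $\rhoEJ=t$ could only be congruent to another point of norm at most $t$ if that point also lies on the boundary with the same norm. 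Either way the distance equals $\rhoEJ$ of the canonical representative, which is $t$ for elements of $\Bt$.

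A secondary detail is the case $f_1-r=f_2-r$, but this cannot happen because $f_1\ne f_2$. I also need $r$ to be a genuine node, which holds trivially since every element of the quotient is a node. The case $r\in\Fv$ is impossible, because $r$ is at distance $t\ge1$ from each fault.
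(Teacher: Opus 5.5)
Your proof is correct and follows the paper's argument. Both reduce the two-fault case to the boundary-difference coverage lemma and place the root at a boundary offset $U$ from $f_1$ (you take $r=f_1-U$ where the paper takes $r=f_1+U$, an immaterial sign choice), and your added justification that quotient graph distance equals $\rho$ of the canonical representative in $H_t$, which the paper uses without comment, is a sound supplement rather than a different route.
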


\begin{proof}
For one fault $f$, choose any $r\in f+\Bt$. For two faults $f_1,f_2$, set $A=f_2-f_1$. By boundary-difference coverage, choose $U,V\in\Bt$ with $A=U-V$. Let $r=f_1+U$. Then $r-f_1=U$, so $\dist(r,f_1)=t$. Also $r-f_2=f_1+U-f_2=U-A=V$, so $\dist(r,f_2)=t$.
\end{proof}

\begin{corollary}[Zero-repair node-only cases]
If $|\Fv|\le2$, $\Fe=\emptyset$, and the selected EJ broadcast tree treats every distance-$t$ node as a leaf, then a root satisfying the preceding theorem makes every faulty node a leaf. Removing those faulty leaves does not disconnect the healthy tree, so zero external repair edges are required.
\end{corollary}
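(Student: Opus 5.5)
The plan is to derive the corollary directly from the One/two-node EJ re-rooting theorem together with the Fault-free EJ orientation tree lemma. There are three steps: make the faults leaves, show that pruning leaves preserves connectivity, and conclude that no repair edges are needed.

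\emph{Step 1: the faulty nodes become leaves.} Since $|\Fv|\le 2$, the re-rooting theorem supplies a root $r$ with $\dist(r,f)=t$ for every $f\in\Fv$. Equivalently, $\rhoEJ(\Delta_r(f))=t$, so each $f$ lies in $r+\Bt$.

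We then argue that no vertex of $\T_{r,\theta}$ has a faulty node as its parent. A vertex $v$ with parent $f$ satisfies $\rhoEJ(\Delta_r(v))=\rhoEJ(\Delta_r(f))+1=t+1$. This is impossible, because every canonical coordinate lies in $\Ht$ and therefore has layer at most $t$. The argument covers any coordinate-reduction orientation, so the leaf hypothesis in the statement holds automatically. It also remains valid if one only assumes the hypothesis as stated.

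\emph{Step 2: removing the faulty leaves leaves a connected healthy tree.} Because $\Fe=\emptyset$, the pruned tree is $\T^-_{r,\theta}=\T_{r,\theta}-\Fv$. Deleting a leaf from a tree, together with its single incident edge, yields a tree on the remaining vertices. Deleting a second leaf, which is still a leaf because no vertex had it as a parent, again yields a tree.

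The root is not deleted. This needs $t\ge1$, so that $\dist(r,f)=t$ forces $f\neq r$; the case $t=0$ is trivial. Hence $\T^-_{r,\theta}$ is a single tree spanning $V\setminus\Fv$.

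\emph{Step 3: no external repair edges are needed.} From Step 2, $c=1$, and the number of component-crossing repair edges is $c-1=0$. Every healthy node is reached through the original parent edges, none of which touches a faulty node or a failed link.

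The only potentially delicate point is the justification that distance-$t$ nodes have no children. This rests on the canonical representative living in $\Ht$, so that there is no layer $t+1$. The argument also uses the fact that the parent rule always steps to layer exactly one smaller, as established in the Fault-free EJ orientation tree lemma. Beyond that check, everything is routine.
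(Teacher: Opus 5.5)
Your proof is correct and follows the paper's own route, which the paper gives inline in the statement itself: the re-rooting theorem puts each fault at layer $t$ relative to $r$, such nodes are leaves, deleting non-root leaves leaves a single tree, so $c=1$ and $c-1=0$ repair edges are needed. Your one addition is the short check that a layer-$t$ vertex cannot be a parent (its child would need layer $t+1$, which does not exist in $\Ht$), together with the observation that $t\ge1$ keeps the root healthy; this shows the leaf hypothesis holds automatically for every coordinate-reduction tree $\mathcal{T}_{r,\theta}$, whereas the paper simply assumes it.
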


\begin{figure}[H]
\centering
\includegraphics[width=0.75\linewidth]{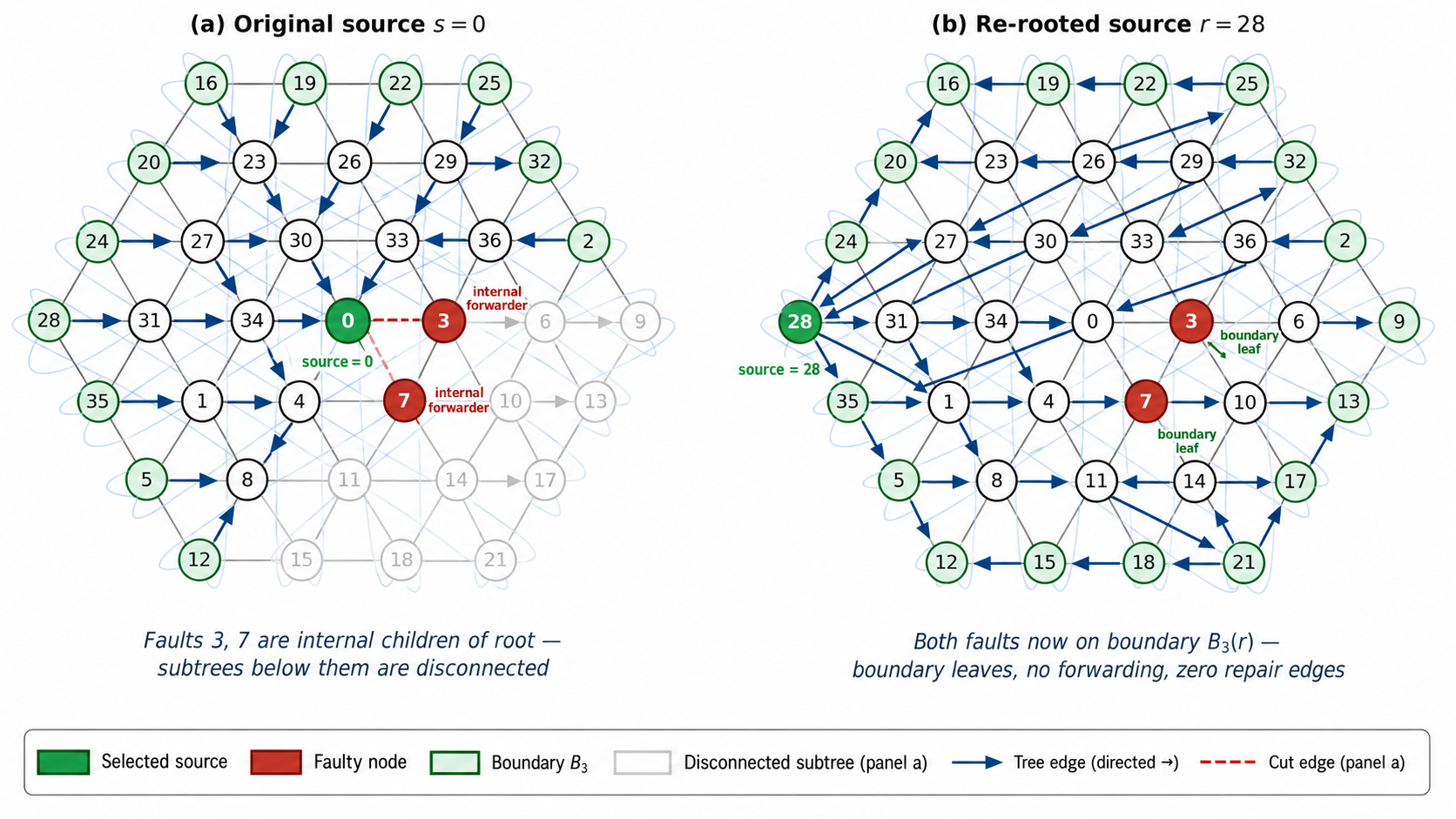}
\caption{Two-fault re-rooting on $H_4$. Under the original source, the two faults can lie on internal forwarding branches. After re-rooting to $r=28$, both faulty nodes lie on the distance-$3$ boundary of the selected root and become leaves, so no component repair is needed in the node-only case.}
\label{fig:rerooting-before-after}
\end{figure}

\begin{remark}[Three-fault limitation]
The one/two-node guarantee does not extend to arbitrary triples. A zero-repair three-node re-rooting exists precisely when
\begin{equation}
\bigcap_{f\in\Fv}\{r:\dist(r,f)=t\}\ne\emptyset.
\end{equation}
There are triples in $H_4$ with empty triple boundary intersection. Thus all higher-order node-fault claims in this hybrid framework must be stated as conditional or empirical unless an additional fault-class restriction is imposed.
\end{remark}

\begin{example}[Concrete $H_4$ three-fault obstruction]
Let $t=3$ and $N=37$. In the integer-label representation with jumps $3,4,7$, take
\begin{equation}
F_1=0,\qquad F_2=5,\qquad F_3=14.
\end{equation}
The distance-three boundary of label $0$ is
\begin{multline}
B_3(0)=\{2,5,9,12,13,15,16,17,18,19,20,21,\\
22,24,25,28,32,35\}.
\end{multline}
By vertex transitivity, $B_3(F_i)=F_i+B_3(0)$ modulo $37$. Direct intersection gives
\begin{equation}
B_3(F_1)\cap B_3(F_2)\cap B_3(F_3)=\emptyset.
\end{equation}
Therefore no re-rooted source is at distance $t$ from all three faults. This example is used only to mark the deterministic boundary of the node-only re-rooting theorem; the hybrid method may still recover many three-fault instances after component repair.
\end{example}

\section{EJ Link-Failure Exclusion}
Node re-rooting uses distance-to-boundary constraints. Link failures require a different condition: a failed link matters only if the selected broadcast tree uses that link as a parent--child edge.

\begin{definition}[Failed-tree-link count]
For a root--orientation pair $(r,\theta)$ and failed-link set $\Fe$, define
\begin{equation}
\lambda(r,\theta,\Fe)=|E(\T_{r,\theta})\cap \Fe|.
\end{equation}
A candidate is link-safe if $\lambda(r,\theta,\Fe)=0$.
\end{definition}

\begin{proposition}[EJ link-exclusion test]
Let $e=\{u,v\}$ be a failed EJ link. For a selected root $r$ and orientation $\theta$, write $\Delta_r(w)$ for the canonical coordinate of $w$ relative to $r$. Then $\T_{r,\theta}$ uses $e$ if and only if
\begin{equation}
\Delta_r(u)=p_\theta(\Delta_r(v))
\quad\text{or}\quad
\Delta_r(v)=p_\theta(\Delta_r(u)).
\end{equation}
Thus testing whether a candidate uses one failed link requires only constant-size coordinate comparisons after canonical reduction.
\end{proposition}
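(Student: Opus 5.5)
The plan is to unfold the definition of the edge set of $\T_{r,\theta}$ and compare it directly with the two conditions in the statement. By construction, $\T_{r,\theta}$ contains, for each non-root $w$, exactly one directed edge $r+p_\theta(\Delta_r(w))\to w$, and no other edges. So the undirected edge $e=\{u,v\}$ lies in $E(\T_{r,\theta})$ exactly when either $u$ is the parent of $v$ or $v$ is the parent of $u$. The proof therefore reduces to showing that, for a non-root node $w$, the statement ``$x$ is the tree parent of $w$'' is equivalent to the coordinate identity $\Delta_r(x)=p_\theta(\Delta_r(w))$.

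First I would dispose of the root case. If $v=r$, then $\Delta_r(v)=0$ and $p_\theta(0)$ is undefined, so the second disjunct cannot hold; in the tree, $r$ has no parent. The same convention applies symmetrically to $u=r$.

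For the forward direction, suppose $u$ is the parent of $v$. Then $u=r+p_\theta(\Delta_r(v))$ in the quotient. Hence $u-r=p_\theta(\Delta_r(v))$. Since $p_\theta(\Delta_r(v))$ has layer $\rhoEJ(\Delta_r(v))-1\le t-1$, it lies in $\Ht$. It is therefore already the canonical representative, and $\Delta_r(u)=\can(u-r)=p_\theta(\Delta_r(v))$.

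For the converse, suppose $\Delta_r(u)=p_\theta(\Delta_r(v))$. Adding $r$ gives $u=r+p_\theta(\Delta_r(v))$, which is by definition the parent of $v$. This uses only that $\can$ chooses a representative of the same residue class modulo $\alpha$, so that $r+\can(u-r)=u$ in the quotient.

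The main obstacle is uniqueness of canonical representatives. I need that $\Ht$ is a complete residue system modulo $\alpha$: this holds because $|\Ht|=N=3t^2+3t+1=|\alpha|^2$. I also need that two distinct points of $\Ht$ are never congruent. Given that, any point of $\Ht$ equals its own canonical form, which is the step used in the forward direction. I would cite the quotient model of the dense EJ section for this fact rather than reprove it.

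The complexity claim then follows directly. Each test needs two canonical reductions, each with constant cost for fixed $t$ via the label arithmetic $\phi$. It also needs two evaluations of $p_\theta$, each scanning at most six directions with constant-time evaluation of $\rhoEJ$, and two coordinate comparisons. Summing over the failed links gives $O(|\Fe|)$ per candidate, with no need to build the tree.
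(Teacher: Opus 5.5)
Your proof is correct and follows the paper's argument: unfold the parent-edge definition of $\T_{r,\theta}$ so that $e$ is a tree edge exactly when one endpoint is the parent of the other, then translate each parent relation into the corresponding coordinate equality. You also make explicit two points the paper leaves implicit, namely that $p_\theta(\Delta_r(v))$ lies in $\Ht$ and is therefore its own canonical representative, and the root case. One small slip in that root case: if $v=r$, it is the \emph{first} disjunct, $\Delta_r(u)=p_\theta(0)$, that is vacuous because the root has no parent, while the second disjunct stays live and covers $r$ being the parent of $u$ (symmetrically for $u=r$).
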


\begin{proof}
By definition of the rooted orientation tree, the parent of a non-root node $w$ is the quotient node with relative coordinate $p_\theta(\Delta_r(w))$. Therefore $u$ is the parent of $v$ if and only if $\Delta_r(u)=p_\theta(\Delta_r(v))$. Similarly, $v$ is the parent of $u$ if and only if $\Delta_r(v)=p_\theta(\Delta_r(u))$. Since an undirected failed link is used by the directed broadcast tree exactly in one of these two parent--child cases, the equivalence follows.
\end{proof}

\begin{corollary}[Link-safe roots and orientations]
For a failed-link set $\Fe$, a candidate $(r,\theta)$ is link-safe if and only if the two parent equalities in the link-exclusion test are false for every $e\in\Fe$. Testing one candidate costs $O(|\Fe|)$ coordinate comparisons and does not require constructing the full tree.
\end{corollary}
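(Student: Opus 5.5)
The corollary follows from the link-exclusion proposition by quantifying over $\Fe$. By definition, $(r,\theta)$ is link-safe exactly when $\lambda(r,\theta,\Fe)=|E(\T_{r,\theta})\cap\Fe|=0$, that is, when no $e\in\Fe$ is an edge of $\T_{r,\theta}$. For each fixed $e=\{u,v\}\in\Fe$, the proposition says $e\in E(\T_{r,\theta})$ if and only if
\begin{equation}
\Delta_r(u)=p_\theta(\Delta_r(v))\quad\text{or}\quad \Delta_r(v)=p_\theta(\Delta_r(u)).
\end{equation}
Negating this equivalence and taking the conjunction over all $e\in\Fe$ gives the characterization: $\lambda=0$ if and only if both parent equalities fail for every failed link. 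No structural argument is needed beyond the proposition.

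For the cost bound, I would count the work per failed link. Computing $\Delta_r(u)$ and $\Delta_r(v)$ is one subtraction in the quotient followed by canonical reduction into $\Ht$. Since $\Ht$ has a bounded number of translates by multiples of $\alpha$ to check, this is $O(1)$ arithmetic. The reduction can be done, for instance, by converting the integer label difference modulo $N$ into an axial representative through a precomputed inverse of $\phi$, or directly by a constant number of shifts by $\alpha$ and its unit multiples.

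Next, evaluating $p_\theta(z)$ scans at most six directions $d_j$, computing $\rhoEJ(z+d_j)$ for each as a maximum of three absolute values. One subtlety: $z+d_j$ may leave $\Ht$ when $z$ is on the boundary. But only inward directions qualify, and these keep the point inside $\Ht$, so no extra reduction is needed. I would also note that $p_\theta$ is only applied to a nonzero coordinate. If $\Delta_r(u)=0$, then $u=r$ has no parent, and that equality is simply false. Together, the two comparisons cost $O(1)$, and summing over $\Fe$ gives $O(|\Fe|)$.

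The only point needing care is the claim that the full tree is never built. Each test touches only the two endpoints of $e$ and their computed parents, so the procedure never enumerates the $N$ vertices of $\T_{r,\theta}$. The main obstacle, if any, is making the $O(1)$ canonical reduction explicit. I would handle this by citing that the quotient modulo $\alpha$ tiles the lattice by translates of $\Ht$, so a constant number of candidate representatives suffices.
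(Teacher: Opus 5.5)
Your proposal is correct and follows the paper's reasoning: the paper gives no separate proof and treats the corollary as an immediate consequence of the link-exclusion proposition, obtained by negating its per-link equivalence, conjoining over all $e\in\Fe$, and charging constant work per failed link. Your added details are sound and make explicit what the paper only asserts: the equality $\Delta_r(v)=p_\theta(\Delta_r(u))$ is simply false when $u=r$, qualifying inward moves never leave $\Ht$, and canonical reduction costs $O(1)$ because the difference of two canonical coordinates lies in the radius-$2t$ hexagon, which meets only the central tile and its six $\alpha$-translates.
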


\begin{example}[Link-exclusion test on $H_4$]
Let $t=3$, $r=0$, and use the first cyclic orientation. The node with label $3$ has coordinate $(1,0)$ and parent label $0$, so the failed link $\{0,3\}$ satisfies the endpoint-parent test and is a failed tree link. By contrast, labels $28$ and $35$ correspond to adjacent boundary coordinates $(-3,0)$ and $(-2,-1)$, but neither endpoint is the selected parent of the other under this orientation; the link is an EJ graph edge but not a tree edge. Thus the candidate using $r=0$ is unsafe for $\{0,3\}$ but safe for $\{28,35\}$. The test uses only the two parent comparisons in Proposition~1 and therefore costs $O(|F_E|)$ for a fixed candidate.
\end{example}

\section{Edge-Minimum Component Repair}
The following theorem is independent of the special EJ geometry once the tree and component graph are fixed. It is included here because it is the central repair certificate of the hybrid framework.

\begin{lemma}[Component lower bound]
If $\T^-_{r,\theta}$ has $c$ healthy components, any non-redundant repaired broadcast tree that preserves those internal components must use at least $c-1$ external component-crossing repair edges.
\end{lemma}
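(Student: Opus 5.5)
The plan is a contraction argument. Let $T^\ast$ be any repaired broadcast tree that reaches every healthy vertex and preserves the internal components $C_1,\ldots,C_c$ of $\T^-_{r,\theta}$. Here ``preserves'' means every edge of $\T^-_{r,\theta}$ is an edge of $T^\ast$. Non-redundancy means $T^\ast$ is a tree, so it has exactly $|V\setminus\Fv|-1$ edges. I would split the edges of $T^\ast$ into two classes: internal edges, with both endpoints in the same $C_i$, and external component-crossing edges. The claim is that there are at least $c-1$ external edges.

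The first step is to bound the internal edges of each component. Each $C_i$ is a tree, being a connected subgraph of the forest $\T^-_{r,\theta}$, so it has $|C_i|-1$ edges. $T^\ast$ contains all of them. It cannot contain any further edge with both endpoints in $C_i$, since that edge together with the path inside $C_i$ would close a cycle in $T^\ast$. So $T^\ast$ has exactly $\sum_i(|C_i|-1)=|V\setminus\Fv|-c$ internal edges. Here I use that the $C_i$ partition the healthy vertices, each healthy vertex lying in some, possibly singleton, component.

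The second step is the count. Subtracting gives exactly
\[
(|V\setminus\Fv|-1)-(|V\setminus\Fv|-c)=c-1
\]
external edges, which in particular is at least $c-1$.

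As a cross-check I would also give the contraction argument. Contract each $C_i$ to a single vertex; the image of $T^\ast$ is a connected multigraph on $c$ vertices whose edges are the external edges. Any connected graph on $c$ vertices has at least $c-1$ edges, which gives the bound without using the tree edge count.

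The main obstacle is the interpretation, not the arithmetic. I must fix precisely what ``preserves those internal components'' and ``non-redundant'' mean so that the $C_i$ genuinely partition the healthy vertex set and no internal edge is dropped. I would state these conventions explicitly. I would also note that the contraction argument needs only connectivity of $T^\ast$, so the lower bound holds even without the non-redundancy assumption.
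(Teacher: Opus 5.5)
Your proof is correct, and the argument you label a cross-check is exactly the paper's proof. The paper contracts each healthy component to a supernode and notes that the repaired tree must connect the $c$ supernodes. It then uses that a connected graph on $c$ vertices has at least $c-1$ edges, each of which lifts to a component-crossing edge.

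Your primary route is different. It is an edge count that uses both hypotheses: non-redundancy gives $T^\ast$ exactly $|V\setminus\Fv|-1$ edges, and preservation plus acyclicity gives exactly $\sum_i(|C_i|-1)$ internal edges. This yields an identity rather than an inequality. Every non-redundant, component-preserving repair uses exactly $c-1$ crossing edges. So within the class the lemma names, the bound is automatically attained, and the ``minimum'' in the subsequent edge-minimum repair theorem is forced rather than optimized.

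The contraction route needs only that $T^\ast$ is connected and spans the healthy vertices, which the $C_i$ partition. It therefore also covers redundant (non-tree) repairs, as you observe, at the cost of giving only $\ge c-1$. It even covers repairs that do not preserve the components, if ``crossing'' is measured relative to the partition $\{C_i\}$. Such crossing edges cannot be surviving tree edges, because any tree edge joining two different components of $\T^-_{r,\theta}$ is a failed link.

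Your explicit convention that the $C_i$, singletons included, partition $V\setminus\Fv$ is a worthwhile addition. The paper's one-line proof relies on it without saying so.
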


\begin{proof}
Contract each healthy component into one supernode. A repaired broadcast tree spanning all healthy nodes must connect the $c$ supernodes. Any connected graph on $c$ vertices has at least $c-1$ edges. Each such edge corresponds to a component-crossing repair edge in the original network.
\end{proof}

\begin{theorem}[Edge-minimum repair for a selected EJ tree]
If the healthy component graph $\Kcomp_{r,\theta}$ is connected, then $\T^-_{r,\theta}$ can be repaired into a non-redundant broadcast tree over all healthy nodes using exactly $c-1$ external component-crossing repair edges. This number is minimum.
\end{theorem}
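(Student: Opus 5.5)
The plan has two halves: a spanning-tree construction on the component graph for sufficiency, and the Component lower bound lemma for minimality. Since $\Kcomp_{r,\theta}$ is connected with $c$ vertices, it has a spanning tree $S$ with exactly $c-1$ edges. For each edge $C_iC_j$ of $S$, the definition of the healthy component graph gives a witness edge $u_{ij}v_{ij}\in E\setminus\Fe$ with $u_{ij}\in C_i$, $v_{ij}\in C_j$ and both endpoints healthy. We fix one such witness per edge of $S$. The repaired graph $R$ is then the union of the pruned forest $\T^-_{r,\theta}$ with these $c-1$ witness edges.

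Next we show that $R$ is a spanning tree of the healthy vertex set. Each $C_i$ is a tree, since it is a connected subgraph of the tree $\T_{r,\theta}$. Together the components have $|V\setminus\Fv|-c$ edges, so adding $c-1$ edges gives $|V\setminus\Fv|-1$ edges in total. For connectivity, any two healthy vertices lie in components joined by a path in $S$; we concatenate the witness edges along that path with intra-component tree paths. A connected graph with one fewer edge than vertices is a tree, and therefore $R$ is acyclic and non-redundant. By construction it uses no faulty node and no failed link.

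To turn $R$ into a broadcast tree, we root it at the healthy source: the chosen $r$ if $r$ is healthy, otherwise the original source, which is assumed healthy. Orienting every edge away from this root gives each healthy non-root node a unique parent. Inside each component, this orientation either agrees with the original parent pointers or reverses them along the path from the entry vertex to the old component root. This re-orientation does not change which internal edges are used, so the internal components are preserved, and the external state is exactly the $c-1$ witness edges.

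Minimality follows directly from the Component lower bound lemma: any non-redundant repair that preserves the $c$ components must connect $c$ supernodes and so needs at least $c-1$ crossing edges. The main obstacle is definitional rather than technical. We must state clearly that "preserving internal components" allows the parent direction inside a component to be reversed when the entry point is not the component's old top vertex, so that the tree edge sets are preserved and only the $c-1$ crossing edges count as exceptional forwarding state.
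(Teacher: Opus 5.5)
Your proposal is correct and follows the paper's own proof: take a spanning tree of $\Kcomp_{r,\theta}$, realize each of its $c-1$ edges by a healthy crossing edge, root the result, and invoke the component lower bound for minimality. The only differences are cosmetic. You certify acyclicity by the edge count $|V\setminus\Fv|-1$ together with connectivity, where the paper uses a contraction argument, and you make explicit the intra-component parent reversal that the paper's phrase ``can be rooted at $r$'' leaves implicit.
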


\begin{proof}
Since $\Kcomp_{r,\theta}$ is connected, choose any spanning tree of $\Kcomp_{r,\theta}$ rooted at the component containing $r$. For each component-level spanning-tree edge, add one corresponding healthy EJ crossing edge between the two original components. Inside each component, the original structure is a tree because it is a subgraph of $\T_{r,\theta}$. Contracting each component maps the repaired structure to the selected component-level spanning tree, so no cycle is introduced. The final graph is connected, acyclic, spans all healthy vertices, excludes faulty nodes and failed links, and can be rooted at $r$. It uses exactly $c-1$ crossing edges, which is minimum by the component lower bound.
\end{proof}

\begin{figure}[H]
\centering
\includegraphics[width=0.75\linewidth]{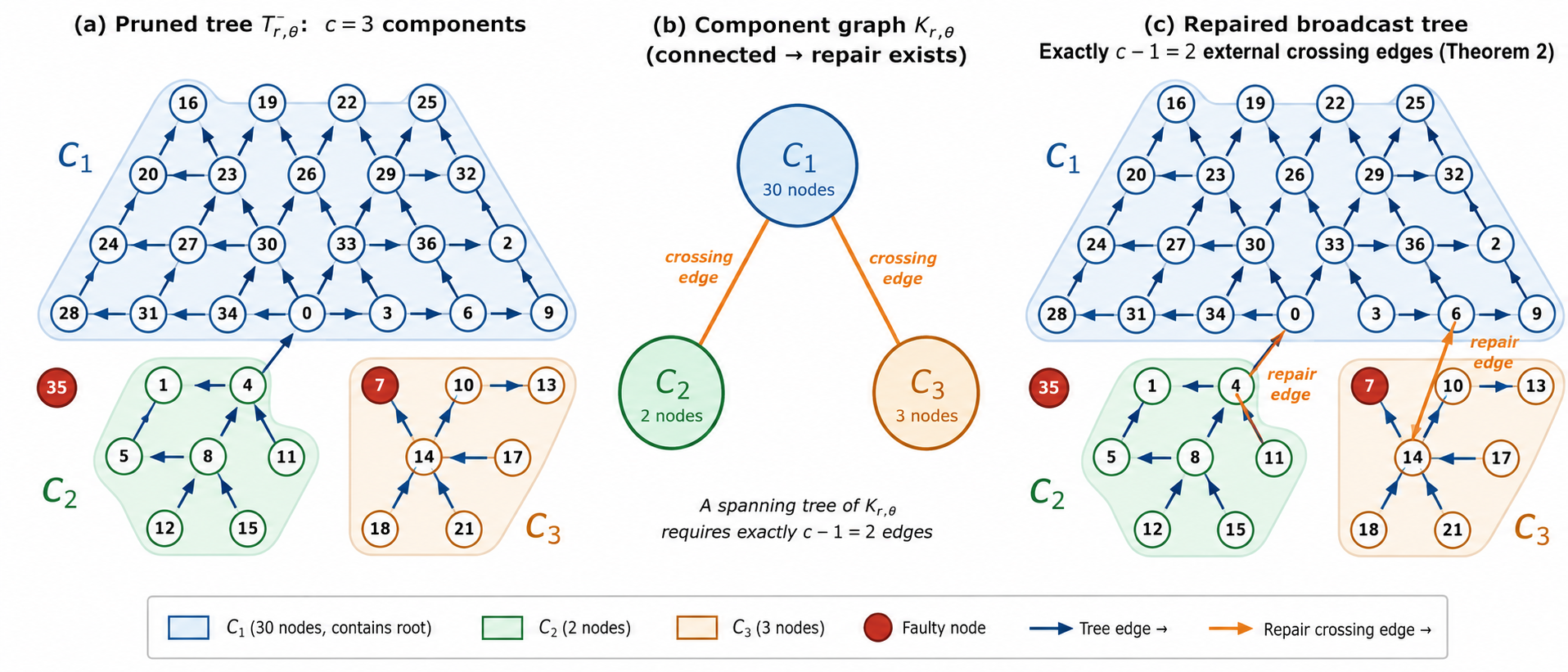}
\caption{Component contraction and edge-minimum repair. After pruning, the healthy forest has three components. The connected component graph admits a component-level spanning tree. Mapping that spanning tree back to EJ crossing edges reconnects the original components using exactly $c-1=2$ external repair edges.}
\label{fig:component-repair}
\end{figure}

\begin{remark}[Repair count versus repaired depth]
The repair-edge count $c-1$ counts exceptional component-crossing forwarding rules. The repaired depth is a separate metric because a repair edge may attach to a component at a vertex far from that component's farthest descendant. Therefore the hybrid framework reports both repair-edge count and repaired depth.
\end{remark}

\begin{lemma}[Repaired-depth accounting bound]
Let $\T^-_{r,\theta}$ have healthy components $C_1,\ldots,C_c$, with $C_1$ containing the selected root $r$. For a non-root component $C_i$, define
\begin{equation}
\ell(C_i)=\min_{v\in C_i} d_{\T_{r,\theta}}(r,v),\qquad
\operatorname{span}(C_i)=\max_{v\in C_i} d_{\T_{r,\theta}}(r,v)-\ell(C_i).
\end{equation}
If the component-level repair attaches $C_i$ through a crossing edge whose already-repaired endpoint has repaired depth $d_{\rm attach}$ and enters $C_i$ at a vertex $b_i$, then every vertex of $C_i$ is reached at repaired depth at most
\begin{equation}
 d_{\rm attach}+1+\ecc_{C_i}(b_i).
\end{equation}
If $b_i$ is a shallowest-layer entry of $C_i$, then $\ecc_{C_i}(b_i)\le \operatorname{span}(C_i)$; in general $\ecc_{C_i}(b_i)\le 2\operatorname{span}(C_i)$. Since every component is a subtree of an EJ coordinate-reduction tree of depth at most $t$, $\operatorname{span}(C_i)\le t$. Consequently, whenever the selected component repair attaches every non-root component from an endpoint of repaired depth at most $t$, the repaired depth satisfies
\begin{equation}
D_{r,\theta}\le 3t+1,
\end{equation}
and it improves to $D_{r,\theta}\le 2t+1$ when each component is entered through a shallowest-layer entry. If the entering crossing edge is counted as the first edge of the component suffix, this is reported as the cleaner $2t$ accounting bound used in the deterministic suffix cases.
\end{lemma}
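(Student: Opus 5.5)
The plan is to argue component by component, in the order in which the component-level spanning tree of the edge-minimum repair theorem attaches components. The root component $C_1$ keeps its original tree distances, which are at most $t$ by the fault-free orientation lemma. For a non-root component $C_i$, the crossing edge ends at a vertex $b_i\in C_i$. Inside $C_i$, every vertex is then reached by a path of length at most $\ecc_{C_i}(b_i)$, so its repaired depth is at most $d_{\rm attach}+1+\ecc_{C_i}(b_i)$. This gives the first displayed bound directly. It uses only that $C_i$ is a connected subtree, so intra-component distances are tree distances, and that repaired depth is the length of the concatenated path from $r$.

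Next I would bound $\ecc_{C_i}(b_i)$. Since $C_i$ is a connected subgraph of $\T_{r,\theta}$ and every parent edge decreases the layer by exactly one, $C_i$ has a unique vertex $m_i$ of minimum layer $\ell(C_i)$. This holds because two minimal vertices joined inside the tree would need a path through their common ancestor, which has strictly smaller layer. Every $v\in C_i$ is a descendant of $m_i$ inside $C_i$, with $d_{C_i}(m_i,v)=d_\T(r,v)-\ell(C_i)\le\operatorname{span}(C_i)$. A shallowest-layer entry is $b_i=m_i$, and the bound $\ecc_{C_i}(b_i)\le\operatorname{span}(C_i)$ follows. For a general entry, I would route through $m_i$ and use the triangle inequality: $d(b_i,v)\le d(b_i,m_i)+d(m_i,v)\le 2\operatorname{span}(C_i)$. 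Finally, layers lie in $\{0,\dots,t\}$, so $\operatorname{span}(C_i)\le t$.

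Substituting the hypothesis $d_{\rm attach}\le t$ then gives $t+1+2t=3t+1$ in general and $t+1+t=2t+1$ for shallowest entries. The $2t$ remark is bookkeeping only: it absorbs the crossing edge into the suffix count. I would state it as an accounting convention rather than a separate inequality.

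The main obstacle is the hypothesis that every attaching endpoint has repaired depth at most $t$. Attachment chains could otherwise compound depths. I would not try to prove that this hypothesis always holds. I would keep it as an explicit assumption, as the statement does, and note that it is satisfied when every non-root component is attached directly from $C_1$ or from vertices at their original tree depth. A second subtle point is the uniqueness of the minimum-layer vertex, which needs the strict layer monotonicity of coordinate-reduction parents. That is why I would prove it first.
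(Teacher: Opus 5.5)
Your proposal is correct and follows essentially the same route as the paper. Both arguments use the concatenated-path bound $d_{\rm attach}+1+\ecc_{C_i}(b_i)$, take the minimum-layer vertex as the local root and common ancestor of $C_i$, apply the triangle inequality through that vertex to get $2\operatorname{span}(C_i)$, and finish by substituting $\operatorname{span}(C_i)\le t$ and $d_{\rm attach}\le t$. The only difference is cosmetic: you justify the local-root property by noting that the lowest common ancestor would have strictly smaller layer, whereas the paper argues that the parent path would otherwise leave and re-enter $C_i$.
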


\begin{proof}
After a crossing edge is added from an already-repaired endpoint to an entry vertex $b_i$ of $C_i$, every vertex of $C_i$ is reached by following the repaired path to the crossing endpoint, traversing the crossing edge, and then following tree edges inside $C_i$. The maximum additional distance inside the component is $\ecc_{C_i}(b_i)$. If $b_i$ is a shallowest-layer entry, the farthest descendant lies at most the layer span away along the coordinate-reduction suffix. The shallowest-layer entry assumption is enforceable in Algorithm~1 at the component-edge selection step by selecting, among all healthy crossing edges between a parent component and $C_i$, an edge whose endpoint in $C_i$ has minimum layer index. This choice is computable during the component-edge scan, does not increase the repair-edge count, and is therefore compatible with the $c-1$ optimality of the edge-minimum repair theorem.

Every component $C_i$ is a connected subtree of $\T_{r,\theta}$, which is a rooted tree whose vertices are stratified by the EJ layer $\rho(\Delta_r(v))$. Within $C_i$, let $\ell_{\min}$ and $\ell_{\max}$ be the minimum and maximum layer indices, so $\operatorname{span}(C_i)=\ell_{\max}-\ell_{\min}$. Because $C_i$ is a connected component of a pruned rooted tree, a minimum-layer vertex $v^*$ is the local root of $C_i$ and an ancestor of every other vertex in $C_i$; otherwise the parent path from a vertex of $C_i$ to the global root would leave and re-enter $C_i$, contradicting component maximality after pruning. For an arbitrary entry vertex $b_i\in C_i$ and any vertex $w\in C_i$, the unique tree path inside $C_i$ is contained in the union of the two ancestor paths through $v^*$, so
\begin{equation}
 d_{C_i}(b_i,w)\le d_{C_i}(b_i,v^*)+d_{C_i}(v^*,w).
\end{equation}
The path from $b_i$ to $v^*$ follows parent edges through layers between $\ell_{\min}$ and $\ell_{\max}$, so its length is at most $\rho(\Delta_r(b_i))-\ell_{\min}\le \operatorname{span}(C_i)$. Similarly, the ancestor path from $v^*$ to $w$ has length at most $\ell_{\max}-\ell_{\min}=\operatorname{span}(C_i)$. Hence $\ecc_{C_i}(b_i)\le 2\operatorname{span}(C_i)$. Since every layer lies between $0$ and $t$, $\operatorname{span}(C_i)\le t$. Substitution proves the bounds.
\end{proof}

\begin{remark}[Why deterministic rows have depth $t$]
In the deterministic one/two-node rows, the selected re-rooted source places each faulty node on the distance-$t$ boundary. Boundary nodes are leaves of the EJ broadcast tree, so deleting them does not create internal detached components. In the single-link rows, either the failed link is avoided or the bypass reconnects the detached suffix at the same layer where the cut occurred. Thus the deepest reachable healthy boundary layer remains $t$, and the repaired depth is exactly $t$ in the deterministic rows of Table~\ref{tab:repair-reduction-v4}. Higher-order rows can have larger delivery tails because several faults may detach internal components whose best entry is not at the original shallowest layer.
\end{remark}

\begin{corollary}[Deterministic-regime exact depth]
In the one/two-node boundary re-rooting regime $(|\Fv|\le 2,\Fe=\emptyset)$, the selected re-rooted tree has repaired depth $D_{r,\theta}=t$ whenever at least one healthy boundary vertex remains. In the single-link regime $(\Fv=\emptyset, |\Fe|=1)$ with a link-safe candidate, $D_{r,\theta}=t$.
\end{corollary}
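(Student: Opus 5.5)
The plan is to treat the two regimes separately. In each, I will show two things: the pruned selected tree $\T^-_{r,\theta}$ needs no crossing edges, and it still contains a healthy vertex at layer $t$. Depth at most $t$ comes from the Fault-free EJ orientation tree lemma. For the matching lower bound I note that pruning only deletes vertices and edges, so every surviving node keeps its original parent path. Its repaired depth therefore equals its layer $\rhoEJ(\Delta_r(v))$. Once these facts are in place, $D_{r,\theta}=t$ exactly.

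\emph{Node regime} ($|\Fv|\le 2$, $\Fe=\emptyset$). Take the root $r$ supplied by the One/two-node EJ re-rooting theorem, so every $f\in\Fv$ lies in $r+\Bt$. The first step is to check that layer-$t$ vertices are leaves of $\T_{r,\theta}$. By the parent rule, any child $w$ of a vertex $u$ satisfies $\rhoEJ(\Delta_r(w))=\rhoEJ(\Delta_r(u))+1$, and $\Ht$ contains no layer exceeding $t$. Hence deleting the faulty nodes removes only leaves. The Zero-repair node-only corollary then gives $c=1$, so the repair is empty and every healthy node keeps its fault-free depth. The healthy root is not faulty, since $\dist(r,f)=t\ge1$ for each fault. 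By hypothesis some healthy boundary vertex $b$ remains, and its depth is $\rhoEJ(\Delta_r(b))=t$. Combined with the upper bound, this gives $D_{r,\theta}=t$.

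\emph{Link regime} ($\Fv=\emptyset$, $|\Fe|=1$, link-safe candidate). Link-safety means $\lambda(r,\theta,\Fe)=0$. By the Link-exclusion test and the corollary on link-safe roots, the failed link is not a parent edge of $\T_{r,\theta}$. Therefore $\T^-_{r,\theta}=\T_{r,\theta}$ is still a spanning tree, $c=1$, and no crossing edge is added. Depth is again at most $t$. Because no vertex is faulty, all $|\Bt|=6t>0$ boundary vertices (for $t\ge1$) survive at depth $t$, and so $D_{r,\theta}=t$.

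The main obstacle is justifying that repaired depth equals tree layer, rather than merely being bounded by it. The point to make explicit is that the selector performs no re-parenting when $c=1$, so $D_{r,\theta}$ is measured in the unmodified tree. A second detail is the nondegenerate boundary: in the node regime it is the stated hypothesis, and in the link regime it follows from $|\Bt|=6t$.
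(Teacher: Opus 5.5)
Your proposal is correct and follows essentially the same route as the paper. Faulty nodes placed on the distance-$t$ boundary are leaves, so pruning leaves one component with the original layer structure, and in the link-safe case the pruned tree is exactly $\T_{r,\theta}$; a surviving layer-$t$ vertex then forces $D_{r,\theta}=t$. The small differences are that you derive the leaf property directly from the parent rule, whereas the paper asserts it. You also take the healthy boundary vertex as the stated hypothesis, whereas the paper shows one always exists because at least $6t-2\ge 4$ of the $6t$ boundary vertices stay healthy.
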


\begin{proof}
In the one/two-node case, the selected root places every faulty node at layer $t$. These nodes are leaves and their removal does not detach any internal component, so the pruned tree keeps the same layer structure as $\T_{r,\theta}$ on the healthy vertices. The boundary $B_t$ has $|B_t|=6t$ vertices, and at most two of them are faulty. For every $t\ge1$, at least $6t-2\ge4$ boundary vertices remain healthy. Hence a healthy layer-$t$ vertex remains in the pruned tree, so the maximum repaired depth is exactly $t$. In the link-safe single-link case, $\lambda=0$ and the pruned tree is identical to $\T_{r,\theta}$ restricted to healthy nodes; the depth is again $t$.
\end{proof}

\begin{example}[Repaired-depth accounting]
Suppose a higher-order placement in $H_5$ ($t=4$) detaches a component $C_i$ whose shallowest layer is $2$ and whose deepest layer is $4$, so $\operatorname{span}(C_i)=2$. If the component is attached from a repaired endpoint at depth $3$, an arbitrary entry may reach the farthest vertex at depth up to $3+1+2=6$, giving overhead $2$ above the diameter. If a shallower crossing edge enters $C_i$ at its minimum-layer entry, the same component is reached with a smaller internal suffix, e.g., depth $3+1+1=5$. This is why Algorithm~1 records the shallowest entry among crossing edges when a depth-bounded certificate is requested: it does not change the $c-1$ repair count, but it reduces delivery-tail depth.
\end{example}

\section{Deterministic Single-Link Repair}
A single failed link has a deterministic repair guarantee in dense EJ networks.

\begin{lemma}[No EJ link is a bridge]
Every EJ graph edge lies on a triangle in the dense EJ quotient. Therefore deleting one link cannot disconnect the dense EJ network.
\end{lemma}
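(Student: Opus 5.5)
The plan is to exhibit, for every edge, an explicit triangle in the dense EJ quotient, and then deduce non-bridgeness from the fact that a triangle is a cycle. The triangle comes from the unit-direction set $\D$ itself. In the EJ lattice, $1+\omega+\omega^2=0$, so the three directions $(1,0)$, $(0,1)$ and $(-1,-1)$ sum to zero. Working in the axial coordinates used here, the relevant identity is
\begin{equation}
(1,0)+(0,1)-(1,1)=0,
\end{equation}
or more usefully for $\D$ as written, $(1,-1)=(1,0)+(0,-1)$. The general fact to establish is this: for every $d\in\D$ there exist $d',d''\in\D$ with $d=d'+d''$, where $d'$ and $d''$ are the two directions cyclically adjacent to $d$ in the hexagon. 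For example, $(1,0)=(1,-1)+(0,1)$ and $(0,1)=(1,0)+(-1,1)$. The remaining cases follow by negation.

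The triangle is then built as follows. Take an edge $\{u,v\}$ with $v=u+d$ in the quotient. Set $w=u+d'$. Then $w$ is adjacent to $u$ via $d'$, and $v=w+d''$, so $w$ is adjacent to $v$ via $d''\in\D$. Since adjacency in the quotient is defined by translation by units modulo $\alpha$, this needs no canonical-coordinate bookkeeping. Equivalently, in the circulant labels it is the identity $t+(t+1)=2t+1$ among the jumps $\pm t,\pm(t+1),\pm(2t+1)$.

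The only point needing care is that $u,v,w$ must be three distinct nodes, so that we really have a 3-cycle and not a degenerate one. Distinctness amounts to $d',d'',d\not\equiv 0\pmod{\alpha}$. This holds because $N=3t^2+3t+1>1$ and the nonzero jumps $t,t+1,2t+1$ are all less than $N$ for $t\ge1$. I would also check separately that the jumps are distinct, so that the graph is simple; I expect this to be the main (minor) obstacle, and it holds for $t\ge1$ since $2t+1<N$ and $t\ne t+1$.

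With the triangle in hand, the conclusion is immediate. After deleting $\{u,v\}$, the path $u\!-\!w\!-\!v$ remains. Hence any walk that used the deleted edge can be rerouted through this path, and $G-\{u,v\}$ stays connected, because the full graph is connected by the spanning-tree lemma.
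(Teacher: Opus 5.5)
Your proposal is correct and follows the paper's proof: the paper takes a direction $h$ cyclically adjacent to $d$ with $h-d\in\D$ and uses the triangle $u\to u+d\to u+h\to u$, which is your construction with $h=d'$ and $d''=d-d'$. Your explicit checks that $u,v,w$ are distinct and that the fault-free network is connected (via the spanning-tree lemma) make precise what the paper leaves implicit. The graph-simplicity check is not needed for this lemma, and your one-line justification of it is incomplete anyway.
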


\begin{proof}
Let $e=(u,u+d)$ where $d$ is one of the six EJ unit directions. Choose a unit direction $h$ adjacent to $d$ in the triangular lattice, so that $h-d$ is also a unit direction. Then
\begin{equation}
 u\longrightarrow u+d\longrightarrow u+h\longrightarrow u
\end{equation}
uses the unit directions $d$, $h-d$, and $-h$, respectively. Thus $e$ lies on a 3-cycle, possibly represented through quotient wraparound at the boundary. Hence no EJ link is a bridge.
\end{proof}

\begin{lemma}[Healthy graph implies component graph]
If the healthy EJ graph $H_t-\Fv-\Fe$ is connected, then the component graph $\Kcomp_{r,\theta}$ is connected for every pruned tree $\T^-_{r,\theta}$ over the healthy vertices.
\end{lemma}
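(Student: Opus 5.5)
The plan is to argue directly from the definitions by projecting paths. The key observation is that the components $C_1,\ldots,C_c$ of $\T^-_{r,\theta}$ partition the healthy vertex set $V\setminus\Fv$. Every healthy vertex survives pruning, since pruning removes only faulty nodes and edges, so it lies in exactly one component. This gives a well-defined projection map $\pi:V\setminus\Fv\to V(\Kcomp_{r,\theta})$ sending a healthy vertex to the component containing it.

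Next I will show that $\pi$ carries healthy edges to adjacencies or self-loops in $\Kcomp_{r,\theta}$. Take any edge $uv$ of the healthy graph $G'=\Ht-\Fv-\Fe$, so $uv\in E\setminus\Fe$ and $u,v\notin\Fv$. Either $\pi(u)=\pi(v)$, or $\pi(u)\ne\pi(v)$. In the second case the edge $uv$ is exactly a witness, in the sense of the definition of the healthy component graph, that $\pi(u)$ and $\pi(v)$ are adjacent in $\Kcomp_{r,\theta}$.

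Now fix two component vertices $C_i,C_j$ and choose representatives $u\in C_i$ and $w\in C_j$. By the connectivity hypothesis on $G'$, there is a path $u=x_0,x_1,\ldots,x_m=w$ using only healthy vertices and non-failed links. Applying $\pi$ gives a sequence $C_i=\pi(x_0),\ldots,\pi(x_m)=C_j$ in which consecutive entries are either equal or adjacent in $\Kcomp_{r,\theta}$. Deleting the repetitions yields a walk from $C_i$ to $C_j$ in $\Kcomp_{r,\theta}$, so $\Kcomp_{r,\theta}$ is connected. Nothing here depends on $(r,\theta)$ or on the EJ geometry, so the conclusion holds for every pruned tree, as the statement claims.

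There is essentially one delicate point. I must make sure the component set is defined on \emph{all} healthy vertices, including healthy vertices isolated by pruning, which form singleton components; otherwise $\pi$ would not be total. I will state this explicitly, appealing to the sentence defining $\Cset_{r,\theta}$ as the components of $\T^-_{r,\theta}$ on healthy vertices. I will also note that $\Ht$ here denotes the dense EJ quotient graph, so the path in $G'$ may use wraparound edges; these are still EJ graph edges and are covered by the definition of $\Kcomp_{r,\theta}$.
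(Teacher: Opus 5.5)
Your proposal is correct and follows the paper's own argument: take a path in the healthy graph between two tree components, note that each component-changing step uses a healthy crossing edge, and conclude that the path projects to a walk in $\Kcomp_{r,\theta}$. Your explicit projection map $\pi$ makes precise what the paper's short proof leaves implicit, as does your remark that singleton components keep $\pi$ defined on every healthy vertex.
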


\begin{proof}
Take any two tree components. Since the healthy graph is connected, there is a healthy graph path between them. Whenever this path moves from one tree component to another, it uses a healthy crossing edge. Hence the path induces a path in the component graph.
\end{proof}

\begin{theorem}[Healthy-graph connectivity is necessary and sufficient]
Let $\Fv$ and $\Fe$ be arbitrary fault sets with the original source healthy, and let
\begin{equation}
G'=H_t-\Fv-\Fe
\end{equation}
be the healthy EJ graph. With the mandatory healthy-source fallback used by Algorithm~1, a repaired broadcast tree over $V\setminus\Fv$ exists if and only if $G'$ is connected. In particular, when $G'$ is connected, the component graph $\Kcomp_{r,\theta}$ is connected for every retained root--orientation pair $(r,\theta)$, and the repair phase succeeds with exactly $c-1$ external component-crossing edges for that selected pruned tree.
\end{theorem}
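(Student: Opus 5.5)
The proof splits into the two directions of the equivalence, followed by the quantitative claim about $c-1$ edges. Throughout, a repaired broadcast tree over $V\setminus\Fv$ means a spanning tree of the healthy vertex set whose edges are all in $E\setminus\Fe$ and avoid $\Fv$, rooted at a healthy source. In other words, it is a spanning tree of $G'=\Ht-\Fv-\Fe$.

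\emph{Necessity.} Suppose a repaired broadcast tree over $V\setminus\Fv$ exists. Every edge of this tree joins two healthy vertices and is not a failed link, so the tree is a subgraph of $G'$. It is also connected and spans every vertex of $G'$. A graph that contains a connected spanning subgraph is connected, so $G'$ is connected. This direction does not depend on the EJ geometry or on the choice of $(r,\theta)$. It also justifies the scope statement that no recovery method can succeed when $G'$ is disconnected.

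\emph{Sufficiency.} Assume $G'$ is connected. First I would fix which pairs are retained. The mandatory healthy-source fallback pins the original source $s$ as a candidate root, together with its orientations, so at least one retained pair $(r,\theta)$ with $r\notin\Fv$ always exists. Next I would confirm that the pruned forest $\T^-_{r,\theta}$ is defined on all of $V\setminus\Fv$. Every healthy vertex remains as a vertex of the forest, possibly isolated, so the components $C_1,\ldots,C_c$ partition the vertex set of $G'$. Then I would invoke the lemma that healthy-graph connectivity implies component-graph connectivity: since $G'$ is connected, $\Kcomp_{r,\theta}$ is connected for every retained pair. Finally, the edge-minimum repair theorem for a selected EJ tree applies. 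It produces a spanning tree of $G'$ that keeps every internal component edge, adds exactly $c-1$ healthy crossing edges, is rooted at $r$, and uses the minimum possible number of crossing edges by the component lower bound. Both the existence claim and the ``in particular'' clause follow.

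The main obstacle is the role of the root. A retained candidate $r$ might itself be faulty, and the definitions only make sense for a healthy root. The phrase ``for every retained root--orientation pair'' has to be read as ranging over candidates with $r\notin\Fv$; the algorithm discards any candidate whose root is faulty. I would state this convention explicitly and note that the fallback guarantees the retained set is nonempty. A second point to state carefully is that the fault-free tree $\T_{r,\theta}$ spans all of $V$ by the fault-free EJ orientation tree lemma. Its pruning therefore covers every healthy vertex, with no healthy vertex missing from the forest, so a spanning tree of $\Kcomp_{r,\theta}$ lifts to a spanning tree of all of $V\setminus\Fv$ rather than of a subset.
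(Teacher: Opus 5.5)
Your proposal is correct and follows the paper's proof. Sufficiency goes through the fallback-guaranteed healthy candidate, the lemma that connectivity of $G'$ forces $\Kcomp_{r,\theta}$ to be connected, and the edge-minimum repair theorem; necessity holds because any repaired tree is a connected spanning subgraph of $G'$, which the paper phrases contrapositively. Your explicit remarks that retained roots must be healthy and that the pruned forest covers all of $V\setminus\Fv$ (by the fault-free orientation-tree lemma) are sound refinements that the paper leaves implicit.
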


\begin{proof}
For sufficiency, suppose $G'$ is connected. By the preceding lemma, for every retained root--orientation pair $(r,\theta)$, the component graph $\Kcomp_{r,\theta}$ of the pruned tree is connected. The edge-minimum repair theorem then adds exactly $c-1$ healthy crossing edges and returns a valid non-redundant broadcast tree over all healthy vertices. Algorithm~1 pins the original healthy source with all orientations in its mandatory fallback set, so at least one healthy retained candidate is evaluated even when the ranked top-$K$ list is capped; hence it succeeds. For necessity, suppose $G'$ is disconnected. Then two healthy vertices lie in different connected components of the healthy graph. No tree using only healthy vertices and healthy links can span both vertices, so no broadcast repair method can succeed.
\end{proof}

\begin{remark}[Candidate retention]
The ranked/capped selector never removes the original source fallback. Algorithm~1 evaluates the union of the top-$K$ ranked root--orientation pairs and the mandatory fallback set containing the original healthy source with all 15 orientations. Thus the cap can affect which high-ranked candidate wins the lexicographic score, but it cannot eliminate the existence of a successful retained candidate whenever $G'$ is connected.
\end{remark}

\begin{corollary}[High-order recovery condition]
The hybrid method fails to produce a repaired broadcast tree exactly when the healthy graph $G'=H_t-\Fv-\Fe$ is disconnected, or when an implementation deliberately evaluates no healthy candidate. Under adversarial fault placement, recovery is governed solely by healthy-graph connectivity: if the adversary's fault set leaves $G'$ connected, the hybrid repair succeeds for that placement; if the adversary disconnects $G'$, no broadcast method can succeed by the necessity direction of the healthy-graph connectivity theorem. The probability bound below therefore quantifies only the likelihood of random-placement local obstructions, not the existence of adversarial ones.
\end{corollary}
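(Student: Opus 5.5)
The corollary is essentially a restatement of the healthy-graph connectivity theorem, split into its two directions plus the implementation caveat. I will argue it as a case analysis on whether $G'=H_t-\Fv-\Fe$ is connected and whether the implementation evaluates at least one healthy retained candidate.

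\emph{Connected $G'$ with a healthy candidate.} Suppose $G'$ is connected and at least one healthy candidate is evaluated. By the Candidate Retention remark, Algorithm~1 always evaluates the mandatory fallback set: the original healthy source with all 15 orientations. Fix any evaluated pair $(r,\theta)$. By the lemma that healthy-graph connectivity implies component-graph connectivity, $\Kcomp_{r,\theta}$ is connected. The edge-minimum repair theorem then produces a non-redundant repaired tree over $V\setminus\Fv$ using $c-1$ crossing edges. So the method does not fail. This covers the adversarial clause: the argument used no property of the fault placement beyond connectivity of $G'$, so it holds for any adversarial $(\Fv,\Fe)$.

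\emph{Disconnected $G'$.} Suppose instead that $G'$ is disconnected. The necessity half of the healthy-graph connectivity theorem shows that no tree using only healthy vertices and links spans all of $V\setminus\Fv$. Hence the hybrid method fails, and so does every broadcast method.

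\emph{No healthy candidate evaluated.} If the implementation deliberately evaluates no healthy candidate, no repaired tree is ever constructed, so the method trivially fails. Together, the three cases give the ``exactly when'' characterization.

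The final sentence, about the probability bound, is interpretive rather than mathematical. I would justify it only by noting that the deterministic statement above is already complete for every placement, so any probabilistic estimate can only bound $\Pr[G'\text{ disconnected}]$ under random faults. The only delicate step is the first case: one must check that the failure condition is stated relative to evaluated candidates. That is exactly why the corollary's disjunct about evaluating no healthy candidate is needed, and why the fallback-retention remark is invoked.
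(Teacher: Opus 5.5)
Your proposal is correct and matches how the paper obtains this corollary: it follows from the sufficiency and necessity directions of the healthy-graph connectivity theorem, the mandatory healthy-source fallback (Candidate retention remark) accounts for the ``no healthy candidate'' disjunct, and the sufficiency argument never uses the fault placement, which gives the adversarial clause. Two small tightenings. In the first case, fix an evaluated pair whose root $r$ is healthy, not an arbitrary evaluated pair, since the repair theorem roots the component-level tree at the component containing $r$. Also, Proposition~3 bounds the probability that a specific local-obstruction pattern lies in $\Fv$, not $\Pr[G'\text{ disconnected}]$ itself.
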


\begin{figure}[H]
\centering
\includegraphics[width=0.75\linewidth]{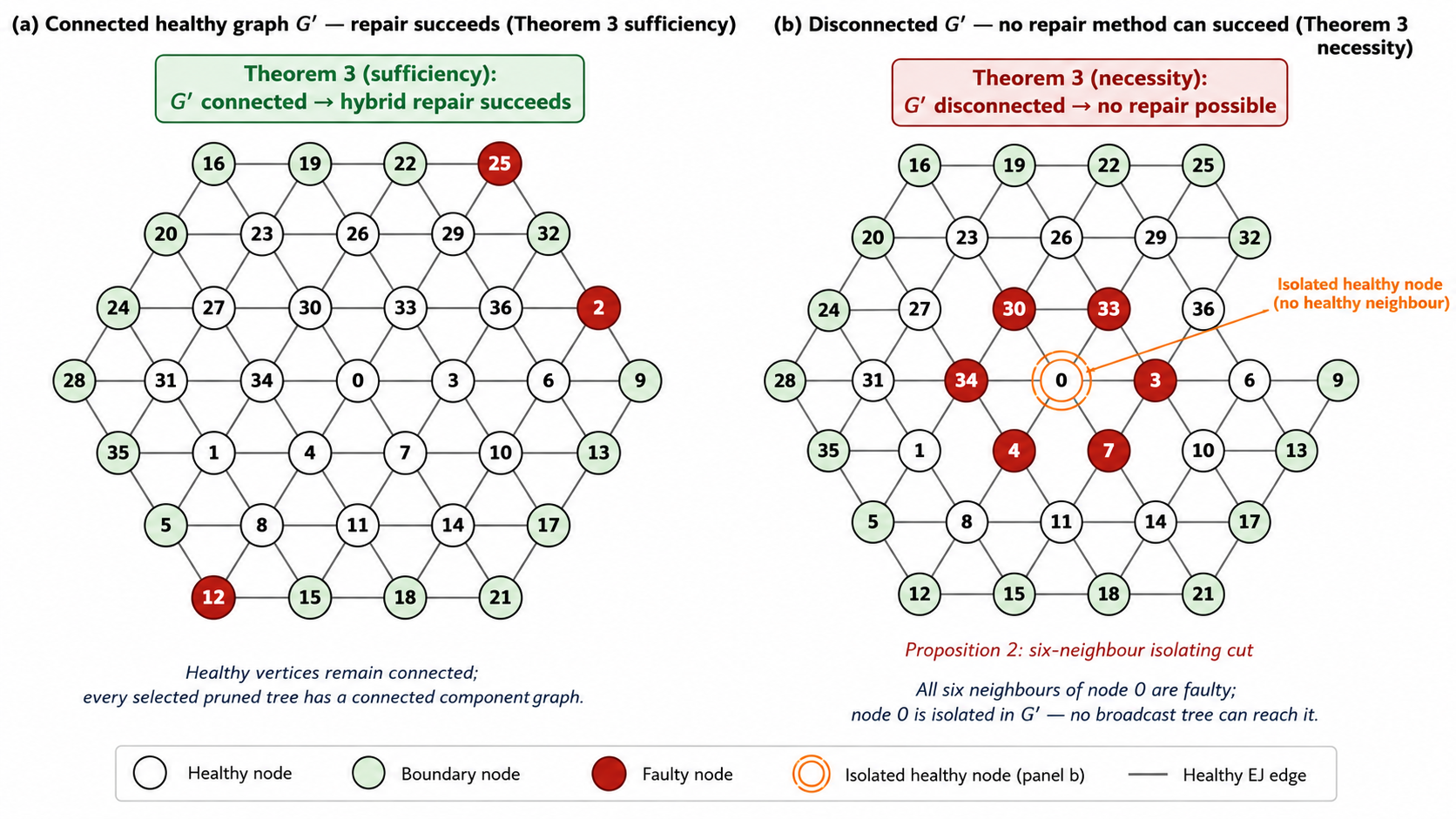}
\caption{Healthy-graph connectivity is the exact recovery condition. If the healthy graph $G'=H_t-F_V-F_E$ remains connected, every selected pruned tree has a connected component graph and hybrid repair succeeds. If a healthy node is isolated by six faulty neighbours, $G'$ is disconnected and no broadcast tree can reach all healthy nodes.}
\label{fig:healthy-connectivity}
\end{figure}

\begin{theorem}[Deterministic single-link repair]
Let $\Fv=\emptyset$ and let $\Fe=\{e\}$ contain one failed EJ link. For any selected healthy root $r$ and orientation $\theta$, the hybrid repair phase succeeds. If $e\notin E(\T_{r,\theta})$, zero repair edges are needed. If $e\in E(\T_{r,\theta})$, exactly one repair edge is necessary and sufficient.
\end{theorem}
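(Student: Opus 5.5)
The plan is to split on whether the failed link is a tree edge and, in each case, reduce the claim to results already stated. With $\Fv=\emptyset$ every node is healthy, so the healthy graph is $G'=\Ht-\Fe$, the dense EJ network with the single link $e$ deleted. By the lemma that no EJ link is a bridge, $G'$ is connected. The lemma that a connected healthy graph implies a connected component graph then shows that $\Kcomp_{r,\theta}$ is connected for the chosen $(r,\theta)$. The edge-minimum repair theorem therefore applies and yields a valid non-redundant repaired tree with exactly $c-1$ crossing edges, so the repair phase succeeds for every healthy root and orientation. What remains is to determine $c$ in each case.

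If $e\notin E(\T_{r,\theta})$, then pruning removes nothing: $\T^-_{r,\theta}=\T_{r,\theta}$. By the fault-free orientation lemma this is a spanning tree, so $c=1$ and $c-1=0$ repair edges are used. Equivalently, the candidate is link-safe in the sense of the link-exclusion test, and it is an avoid-only candidate.

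If $e\in E(\T_{r,\theta})$, deleting one edge from a spanning tree leaves exactly two components, so $c=2$. Concretely, by the link-exclusion test one endpoint is the parent of the other. The components are then the subtree rooted at the child endpoint and its complement, which contains $r$. The component lower bound gives at least $c-1=1$ repair edge, so one edge is necessary. The edge-minimum repair theorem gives that $c-1=1$ crossing edge suffices. Such an edge exists because $G'$ is connected, so some healthy edge other than $e$ joins the two sides.

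The main obstacle I expect is justifying the no-bridge lemma in the quotient for small $t$. I need to confirm that the triangle $u\to u+d\to u+h\to u$ does not collapse, for example through $u+d=u+h$ modulo $\alpha$, and does not reuse $e$ itself. For $t\ge1$ the six unit directions are distinct in the quotient, since the jumps $\pm t,\pm(t+1),\pm(2t+1)$ are distinct modulo $N=3t^2+3t+1$. Hence the two other triangle edges differ from $e$, and this settles the point. If needed, I would exhibit the crossing edge explicitly as the edge $u+d\to u+h$ or $u+h\to u$ of that triangle.
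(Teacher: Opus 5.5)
Your proof is correct and follows the paper's argument. You split on whether $e$ is a tree edge, use the no-bridge lemma to get a connected (two-vertex) component graph, then invoke the edge-minimum repair theorem for sufficiency and the component lower bound for necessity; your detour through the healthy-graph-implies-component-graph lemma and your check that the jumps $\pm t,\pm(t+1),\pm(2t+1)$ are distinct and nonzero modulo $N$ (so the quotient triangle is a genuine 3-cycle not reusing $e$) just make explicit details the paper leaves implicit.
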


\begin{proof}
If $e$ is not a tree edge, pruning does not change the tree and no repair is needed. If $e$ is a tree edge, deleting it splits $\T_{r,\theta}$ into exactly two healthy components. By the no-bridge lemma, deleting one EJ link does not disconnect the underlying healthy EJ graph. Therefore the two-component graph is connected. The edge-minimum repair theorem repairs it with $c-1=1$ crossing edge. The component lower bound shows that one crossing edge is necessary.
\end{proof}

\begin{corollary}[One-node/one-link bounded mixed case]
If one node fault is placed on the boundary by re-rooting and one failed link is present, then the selected tree either avoids the failed link or the failed link creates a two-component residual cut that is repaired by one edge, provided the healthy graph remains connected. In particular, the node fault itself does not create an internal forwarding cut under the boundary-leaf condition.
\end{corollary}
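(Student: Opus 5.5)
The plan is to combine the one/two-node re-rooting theorem (in its one-fault form) with the deterministic single-link repair theorem. The point to establish is that the boundary node fault contributes no internal cut, so the only possible residual cut is the one caused by the failed link.

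\textbf{Step 1: place the node fault.} Let $\Fv=\{f\}$ and $\Fe=\{e\}$. By the one/two-node re-rooting theorem, pick $r\in f+\Bt$, so that $\dist(r,f)=t$. Under the coordinate-reduction parent rule, every parent lies one layer closer to $r$. A vertex at layer $t$ therefore cannot be the parent of anything, since its child would have to sit at layer $t+1>t$. Hence $f$ is a leaf of $\T_{r,\theta}$ for every orientation $\theta$.

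\textbf{Step 2: prune the node fault.} Deleting the leaf $f$, together with its single parent edge, leaves $\T_{r,\theta}-f$ a spanning tree of $V\setminus\{f\}$. This proves the final sentence of the corollary: the node fault creates no forwarding cut.

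\textbf{Step 3: split on the link.} Apply the link-exclusion test (Proposition~1) to $e$.
\begin{itemize}
\item If neither parent equality holds, or if $e$ is incident to $f$ (so it has already been removed with $f$), then $\T^-_{r,\theta}=\T_{r,\theta}-f$ is a single component and zero repair edges are needed.
\item Otherwise $e$ is an edge of the tree $\T_{r,\theta}-f$. Deleting one edge from a tree yields exactly two components, so $c=2$.
\end{itemize}

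\textbf{Step 4: repair the two-component case.} By hypothesis the healthy graph $G'=\Ht-\Fv-\Fe$ is connected. The lemma that a connected healthy graph implies a connected component graph then gives $\Kcomp_{r,\theta}$ connected. The edge-minimum repair theorem supplies exactly $c-1=1$ crossing edge, and the component lower bound shows that this one edge is necessary.

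\textbf{Main obstacle.} The argument needs the connectivity hypothesis on $G'$ explicitly. Unlike the pure single-link theorem, the no-bridge lemma alone does not suffice here. The triangle through $e$ may pass through $f$, and $f$ together with $e$ could in principle isolate a region, for example a vertex of low degree. For this reason the corollary carries the proviso, and I would rely on the healthy-graph lemma rather than on the triangle argument.

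A second point to check carefully is that the leaf property holds in the quotient. Canonical coordinates lie in $\Ht$, and every layer-$t$ coordinate is maximal there, so no canonical coordinate $z$ has $p_\theta(z)$ at layer $t$. This closes Step 1.
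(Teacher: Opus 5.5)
Your proof is correct and follows the route the paper intends. The paper gives no separate proof of this corollary; it presents it as the combination of the boundary-leaf property from re-rooting (as in the zero-repair node-only corollary) with the two-component split from the single-link theorem, using the stated proviso through the lemma that a connected healthy graph yields a connected component graph, and then the $c-1$ repair theorem with the component lower bound. Two small remarks: your separate treatment of a failed link incident to $f$ is a correct sharpening of the stated dichotomy, and the proviso you rely on is in fact automatic — each EJ link lies on two triangles with distinct apexes and the six neighbours of $f$ form a ring, so $G'$ is always connected when $|F_V|=|F_E|=1$, and the low-degree scenario in your ``main obstacle'' paragraph cannot arise.
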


\section{Hybrid Re-Rooting and Repair Algorithm}
The hybrid algorithm evaluates a bounded candidate set of root--orientation pairs. For node faults, the re-rooting theorem provides exact candidates for $|\Fv|\le2$. For link-only or mixed cases, link-safety testing filters candidates; if no link-safe candidate is found, the algorithm selects the candidate that minimizes residual component repair.

\begin{algorithm}[H]
\caption{EJ Hybrid Re-Rooting-Assisted Edge-Minimum Repair}
\begin{algorithmic}[1]
\REQUIRE Dense EJ network $H_t$, original source $s$, node faults $\Fv$, link faults $\Fe$, orientation family $\Theta$.
\ENSURE Repaired non-redundant broadcast tree over $V\setminus\Fv$, or failure certificate.
\STATE Generate candidate roots $R$: exact boundary roots for $|\Fv|\le2$; otherwise leaf-score-ranked roots; if the original source $s$ is healthy, mark $(s,\theta)$ for every $\theta\in\Theta$ as mandatory fallback pairs.
\STATE Rank candidate root--orientation pairs cheaply by leaf score $L(r)$ and failed-tree-link count $\lambda(r,\theta,\Fe)$.
\STATE Set best $\leftarrow$ null.
\FOR{each pair in the union of the top-$K$ ranked pairs and the mandatory fallback pairs}
    \STATE Compute $\lambda(r,\theta,\Fe)$ using the EJ link-exclusion test.
    \STATE Prune $\T_{r,\theta}$ by deleting $\Fv$ and failed tree links in $\Fe$.
    \STATE Compute components $C_1,\ldots,C_c$ and the component graph $\Kcomp_{r,\theta}$.
    \IF{$\Kcomp_{r,\theta}$ is connected}
        \STATE Add a component-level spanning tree and map it to $c-1$ healthy EJ crossing edges, selecting for each newly attached component the crossing edge whose entry vertex in that component has minimum layer index.
        \STATE Validate reachability, acyclicity, parent uniqueness, fault exclusion, and failed-link exclusion.
        \STATE Score the candidate by $(c-1,D_{r,\theta},M_{r,\theta})$ and update best if improved.
    \ENDIF
\ENDFOR
\RETURN best valid candidate, or a root-cause label if best is null.
\end{algorithmic}
\end{algorithm}

\begin{example}[End-to-end hybrid execution, $t=3$]
Let $s=0$, $\Fv=\{3\}$, and $\Fe=\{\{12,16\}\}$ in $H_4$ ($t=3,N=37$).
\emph{Avoid-only case.} Root $r=28$ gives $\rho(\Delta_{28}(3))=3=t$, so node $3$ is a boundary leaf. Under orientation $C_0$, Proposition~1 gives $\lambda(28,C_0,\Fe)=0$: neither endpoint of $\{12,16\}$ is the $C_0$-parent of the other relative to root $28$. The pruned tree has $c=1$ component and score $(0,3,1)$.
\emph{Residual-repair case.} Keep root $r=0$ and orientation $C_0$, but let the failed link be $e'=\{12,8\}$. Under the label convention above, labels $12$ and $8$ have coordinates $(0,-3)$ and $(0,-2)$, and $p_{C_0}(0,-3)=(0,-2)$, so $e'$ is a tree link and $\lambda(0,C_0,\{e'\})=1$. Pruning detaches one component; the two-component graph is connected because no EJ link is a bridge. Algorithm~1 adds one shallowest-entry crossing edge and returns score $(1,D_{0,C_0},1)$ with $D_{0,C_0}\le 2t+1=7$ by Lemma~4.
\end{example}

\begin{remark}[Exact repair versus capped selection]
For every fully repaired candidate $(r,\theta)$, the component theorem gives an exact $c-1$ repair certificate if $\Kcomp_{r,\theta}$ is connected. The implementation used in the validation run ranks many candidates cheaply and fully repairs the best 64 ranked root--orientation candidates plus the mandatory healthy-source fallback pairs. Therefore the reported hybrid selector is a ranked/capped engineering selector, not a proof that no untested root could have a better score. The fixed-source comparator in the validation run evaluates all 15 orientations exactly.
\end{remark}

\begin{theorem}[Hybrid correctness certificate]
If Algorithm~1 returns a candidate for a selected pair $(r,\theta)$ whose component graph $\Kcomp_{r,\theta}$ is connected, then the returned structure is a non-redundant broadcast tree over all healthy vertices, excludes every faulty node and failed link, uses exactly $c-1$ external component-crossing repair edges for that selected pruned tree, and uses the minimum possible number of external crossing edges among all repairs that preserve the selected healthy components.
\end{theorem}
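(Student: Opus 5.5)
The plan is to read the certificate directly off the steps of Algorithm~1 and reduce each claimed property to results already established. A candidate is returned only if it passed the branch where $\Kcomp_{r,\theta}$ is connected. In that branch the algorithm builds a component-level spanning tree of $\Kcomp_{r,\theta}$ rooted at the component containing $r$. It then maps each component-level edge to one healthy EJ crossing edge; the shallowest-layer entry rule only picks \emph{which} crossing edge represents a given component-level edge, not how many. So the returned structure is exactly the construction used in the proof of the edge-minimum repair theorem. That theorem then gives connectivity, acyclicity, spanning of all healthy vertices, and a count of exactly $c-1$ crossing edges.

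Next I would handle fault exclusion. The internal edges are edges of $\T^-_{r,\theta}=\T_{r,\theta}-\Fv-\Fe$, so they avoid every faulty node and every failed link by the definition of pruning. The crossing edges are chosen among edges $uv\in E\setminus\Fe$ with $u,v\notin\Fv$, which is the definition of adjacency in $\Kcomp_{r,\theta}$. Hence no faulty node and no failed link appears. For non-redundancy (unique parents), I would root the resulting spanning tree at $r$. Each non-root component $C_i$ receives exactly one incoming crossing edge, namely its component-tree parent edge, at its entry vertex $b_i$. Orienting the internal tree edges of $C_i$ away from $b_i$ then gives every vertex exactly one parent.

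The step I expect to need care is exactly this re-orientation. The original parent pointers of $\T_{r,\theta}$ point toward the local root $v^*$ of $C_i$, while $b_i$ may differ from $v^*$. I would therefore argue at the level of the undirected spanning tree and define parents by the unique tree path from $r$. This automatically gives parent uniqueness and acyclicity, and it agrees with the ``rooted at $r$'' statement in the edge-minimum repair theorem. The validation step of Algorithm~1 only confirms this; it adds no logical content.

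Finally, minimality follows from the component lower bound lemma. Any repair that preserves the $c$ selected healthy components must connect $c$ supernodes after contraction, and so needs at least $c-1$ crossing edges. Together with the exact count above, this shows the returned structure attains the minimum among component-preserving repairs, completing the certificate.
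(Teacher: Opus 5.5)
Your proposal is correct and follows essentially the same route as the paper: the connected branch yields component subtrees joined by a component-level spanning tree mapped to $c-1$ healthy crossing edges, fault and failed-link exclusion come from pruning and the definition of $\Kcomp_{r,\theta}$, unique parents come from rooting at $r$, and minimality comes from the component lower bound. Your explicit note about re-orienting the internal edges of $C_i$ away from the entry vertex $b_i$ is a useful detail that the paper compresses into ``rooting it at $r$.''
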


\begin{proof}
The algorithm first deletes faulty nodes and failed tree links, so every retained internal component contains only healthy vertices and healthy tree edges. The connected component graph supplies a component-level spanning tree. Mapping each component-level edge to a healthy EJ crossing edge connects all healthy components without adding a failed edge. Since each original component is a tree and the component-level structure is also a tree, the expanded repaired structure is connected and acyclic. Rooting it at $r$ gives one parent for every healthy non-root vertex and no parent for the root, so the result is a non-redundant broadcast tree. The number of added crossing edges is exactly the number of edges in the component-level spanning tree, namely $c-1$, and the component lower bound proves that no repair preserving the same selected components can use fewer.
\end{proof}

\section{Runtime Micro-Re-Rooting Protocol}
The transient model considered in this framework is a single link discovered during propagation. Suppose a node $u$ attempts to forward to child $v$ and detects that the link $e=\{u,v\}$ is unavailable.

\textbf{Phase 1: observation.} The detector records the failed link, the detection layer, and the current reached set. A conservative global recovery restarts the selection and repair process over the whole broadcast instance and therefore inherits all guarantees of Algorithm~1. A regional implementation can restrict the component scan to the subtree rooted at the detection point: only the descendants of the forwarding node $u$ are re-scanned, the already delivered portion is treated as fixed, and the link-exclusion test and component repair are applied only to the affected subtree. The regional variant preserves the $c-1$ repair optimality guarantee for the affected subtree because the component-repair theorem applies to any selected tree or subtree after its healthy components are contracted. The regional variant does not restart delivery to already reached nodes, so it is strictly less invasive than global recovery. Its correctness requires that no cycle be introduced at the junction between the delivered prefix and the re-routed suffix; this is enforced by keeping $u$ as the unique attachment point of the re-routed descendants.

\textbf{Phase 2: candidate selection and link test.} The candidate generator selects root--orientation pairs. For every candidate, the link-exclusion test determines whether $e$ is used by the candidate tree. Candidates with $\lambda=0$ are preferred; otherwise the algorithm keeps candidates with small component count and connected component graph.

\textbf{Phase 3: component repair.} The selected tree is pruned, the healthy components are contracted, and a component-level spanning tree is mapped back to EJ repair edges. The recovered tree is then used to complete delivery. The graph-theoretic guarantee is identical to the static case once the failed link is included in $\Fe$.

\section{Higher-Order Heuristic Regimes}
For $|\Fv|\ge3$, a common distance-$t$ root is not guaranteed. The hybrid method therefore uses a best-effort leaf score
\begin{equation}
L(r)=|\{f\in\Fv:\dist(r,f)=t\}|.
\end{equation}
Candidate roots are ranked by high $L(r)$ and low failed-tree-link count $\lambda(r,\theta,\Fe)$. By the healthy-graph connectivity theorem, success in the higher-order regime is governed by the connectivity of the healthy graph $G'=H_t-\Fv-\Fe$. The heuristic component of Algorithm~1---leaf-score ranking, link-safety filtering, and the full-repair cap---controls which successful root--orientation pair is selected and therefore the repair-edge count and repaired depth. It does not create a new graph-theoretic recovery condition: a successful repaired tree exists for every fault set that leaves $G'$ connected, and no method can succeed when $G'$ is disconnected.

\begin{proposition}[Local obstruction]
A set of six failed neighbors of a healthy EJ node isolates that node from the healthy graph. Likewise, the six incident links of a healthy EJ node isolate it as a singleton component. Hence no method can guarantee recovery for arbitrary high-order node or link faults without excluding such local cuts.
\end{proposition}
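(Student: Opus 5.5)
The plan is to prove the two assertions directly from the definition of the healthy graph $G'=\Ht-\Fv-\Fe$, and then to deduce the ``no method can guarantee recovery'' clause from the necessity direction of the healthy-graph connectivity theorem.

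\emph{Node version.} Let $v$ be a healthy node, so $v\notin\Fv$, and let its six EJ neighbours $v+d$, $d\in\D$, all lie in $\Fv$. The key observation is that in the dense EJ quotient every edge incident to $v$ has the form $\{v,v+d\}$ with $d\in\D$, so the neighbourhood of $v$ is exactly these six nodes. When the quotient wraps around for small $t$, some of the $v+d$ may coincide; this only shrinks the neighbourhood, and the argument still holds.
\begin{itemize}
\item Deleting $\Fv$ therefore removes every neighbour of $v$, and $v$ has degree zero in $G'$.
\item Since $N=3t^2+3t+1>7$ for $t\ge2$, some healthy vertex lies outside $\{v\}\cup(v+\D)$. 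Hence $G'$ has at least two components, $\{v\}$ and the component containing that vertex.
\item The source cannot be one of the faulty neighbours, because it is assumed healthy. So either $v$ is the source or $v$ is cut off from it. In both cases $G'$ is disconnected, with $\{v\}$ as a singleton component.
\end{itemize}
I would restrict to $t\ge2$, or remark that $t=1$ is degenerate ($N=7$), since the other healthy vertex is needed there.

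\emph{Link version.} Let $\Fe$ contain the six links $\{v,v+d\}$, $d\in\D$. These are by definition all edges of $\Ht$ incident to $v$. Removing them leaves $v$ isolated in $G'$. All other vertices remain healthy, so $G'$ has at least two components whenever $N\ge2$. Unlike the node version, the six neighbours are not deleted: they stay in $G'$ and are simply no longer adjacent to $v$.

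\emph{Conclusion.} In both cases $G'$ is disconnected. By the necessity direction of the healthy-graph connectivity theorem, no tree using only healthy vertices and healthy links can span $V\setminus\Fv$. Hence no repair method can succeed on these fault sets. These fault sets have size $6$, so arbitrary faults of order at least six cannot be tolerated unless such local cuts are excluded.

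The only delicate step is checking that the six listed neighbours or links exhaust the adjacency of $v$ in the quotient. This follows from the circulant description with jumps $\pm t,\pm(t+1),\pm(2t+1)$, where possible coincidences only reduce the count. Everything else is immediate from the definitions.
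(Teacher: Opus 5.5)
Your proposal is correct and takes the same route as the paper: deleting the six neighbours (or the six incident links) leaves $v$ with degree zero in $G'=\Ht-\Fv-\Fe$, so $G'$ is disconnected and no healthy spanning tree can exist. Your extra checks are refinements that the paper's short proof leaves implicit: one is that another healthy vertex must survive, which genuinely fails in the node version when $t=1$ and $N=7$, and the other is that the healthy-source assumption is respected.
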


\begin{proof}
Every EJ node has exactly six incident unit-direction links. If all six neighboring nodes fail, the central healthy node has no healthy neighbor. If all six incident links fail, the same node has no remaining incident healthy link. In either case, the healthy graph is disconnected, so no broadcast tree over all healthy nodes can exist.
\end{proof}

\begin{proposition}[Local-obstruction probability scale]
Let $q$ node faults be sampled uniformly without replacement, and let $\mathcal{P}_t$ be any translated local obstruction family with at most $CN$ patterns, each consisting of $s$ specified vertices, where $C$ and $s$ are independent of $t$. Then
\begin{equation}
\Pr[\exists P\in\mathcal{P}_t: P\subseteq \Fv]
\le CN\left(\frac{q}{N}\right)^s.
\end{equation}
For the six-neighbor isolating obstruction, $s=6$, and since $N=\Theta(t^2)$, the bound is $O(q^6/t^{10})$ for fixed $q$.
\end{proposition}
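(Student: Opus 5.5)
The plan is a union bound over the obstruction family, combined with an exact hypergeometric computation for each fixed pattern. Fix one pattern $P\in\mathcal{P}_t$ with $|P|=s$. Because $\Fv$ is a uniformly random $q$-subset of the $N$ vertices, the event $P\subseteq\Fv$ has probability
\begin{equation}
\Pr[P\subseteq\Fv]=\frac{\binom{N-s}{q-s}}{\binom{N}{q}}=\prod_{i=0}^{s-1}\frac{q-i}{N-i}
\end{equation}
when $q\ge s$, and probability $0$ when $q<s$. In the second case the bound is trivial.

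The next step bounds each factor. For $0\le i\le s-1$ and $q\le N$ we have $(q-i)/(N-i)\le q/N$, since cross-multiplying gives $(q-i)N\le q(N-i)$, which is equivalent to $iq\le iN$. Hence $\Pr[P\subseteq\Fv]\le (q/N)^s$ for every single pattern. Summing over the at most $CN$ patterns, the union bound gives
\begin{equation}
\Pr[\exists P\in\mathcal{P}_t:P\subseteq\Fv]\le CN\left(\frac{q}{N}\right)^s.
\end{equation}

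For the six-neighbor isolating obstruction, the family is indexed by the center vertex, so it has exactly $N$ patterns and we may take $C=1$. Each pattern is the six-vertex neighborhood $a+\D$, so $s=6$. This neighborhood has six distinct vertices in the quotient, because the jumps $\pm t,\pm(t+1),\pm(2t+1)$ are distinct modulo $N$ for $t\ge1$. Note that the center is required to be healthy, so the obstruction event is contained in the event $a+\D\subseteq\Fv$; the upper bound therefore still holds. The bound then becomes $N(q/N)^6=q^6/N^5$. Using $N=3t^2+3t+1=\Theta(t^2)$ gives $N^5=\Theta(t^{10})$, so the bound is $O(q^6/t^{10})$ for fixed $q$.

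There is no real obstacle here. The only step needing care is the factor-wise inequality, and it only matters when $q\ge s$; the other point to check is that each pattern has exactly $s$ distinct vertices, which the statement assumes and which was verified above for the six-neighbor case.
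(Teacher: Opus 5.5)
Your proposal is correct and follows the paper's proof. You bound the probability that a single fixed $s$-vertex pattern lies in $\Fv$ by $(q/N)^s$, union-bound over the at most $CN$ translated patterns, and substitute $s=6$ and $N=3t^2+3t+1=\Theta(t^2)$. Your exact hypergeometric product with the factor-wise inequality $(q-i)/(N-i)\le q/N$, and your check that the six neighbors $a+\D$ are distinct in the quotient, spell out steps the paper only asserts.
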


\begin{proof}
For one fixed pattern $P$ of size $s$, the probability that all its vertices are sampled is at most $(q/N)^s$. A union bound over at most $CN$ translated patterns gives the first inequality. Substituting $N=3t^2+3t+1=\Theta(t^2)$ and $s=6$ gives $O(q^6/t^{10})$.
\end{proof}

By the healthy-graph connectivity theorem, the theoretical success condition is connectivity of $G'$, not a special property of the selected component graph. Proposition~3 applies to uniform random sampling and quantifies one important class of local obstructions. For non-uniform or clustered distributions, the bound degrades in proportion to how strongly faults concentrate around a single node neighborhood. For adversarial distributions, no probabilistic argument is needed for correctness: if the adversarial fault set leaves $G'$ connected, the hybrid repair succeeds for that placement; if it disconnects $G'$, no broadcast method can succeed.

\section{Experimental Evaluation}
This section reports the validation campaign for the EJ hybrid framework. The experiment is graph-level and proof-auditing oriented: it tests reachability, component repair, failed-link exclusion, repair-edge count, repaired depth, selector settings, and root-cause labels across the regimes used in the paper. The theorem statements above do not depend on random sampling; the experiment validates the implementation and quantifies the structural benefit of re-rooting-assisted repair.

\subsection{Experimental Settings}
The validation used script for Python 3, NumPy, macOS, 10 workers, $t\in\{10,25,50,100,200\}$, 13 scenarios, four placement modes, and 1000 trials per exact setting. This gives 260,000 raw trials. The run used root cap 20000, full-repair-candidate cap 64, hybrid orientation cap 15, and fixed-source orientation cap 15. The elapsed time was 94.42 hours, with an average rate of 0.765 trials/s. The fixed-source comparator therefore remains the exact 15-orientation comparator, while the hybrid root search is a ranked/capped engineering selector. To address comparator and selector-breadth concerns, the artifact also includes a full postprocessed global-BFS rebuild audit over the same 260,000 trials and a targeted cap-sensitivity audit over the hardest higher-order stress settings.

\subsection{Formal Fault-Placement Modes}
The four placement modes are defined as follows. Let $s=0$ be the original source, let $\rho_s(v)=d(s,v)$, and let $T_{s,C_0}$ denote the original source tree under orientation $C_0$.
\begin{itemize}
    \item \textbf{Random}: sample faulty nodes uniformly without replacement from $V\setminus\{s\}$ and failed links uniformly from healthy EJ edges.
    \item \textbf{Near-source}: sample faulty nodes from the inner ball $\{v:1\le \rho_s(v)\le \max(1,\lfloor t/5\rfloor)\}$ when possible; failed links are sampled preferentially from original tree edges whose child layer is at most $\max(2,\lfloor t/5\rfloor)$.
    \item \textbf{Critical}: sample faulty nodes from internal ray/axis positions with $1\le \rho_s(v)\le \max(2,\lfloor t/2\rfloor)$, where $x=0$, $y=0$, or $x+y=0$. Failed links are sampled preferentially from original tree edges in the same internal layer range. These positions are not worst-case adversarial cuts; they are high-impact positions for the fixed source tree.
    \item \textbf{Close}: choose a random healthy seed and then sample the remaining faults from its constant-radius EJ neighborhood by local breadth-first expansion. For two-fault instances this produces close-pair placements; for higher-order rows it produces clustered local stress tests.
\end{itemize}
These definitions are distributional. They do not claim to cover every adversarial high-order cut; the high-order rows are explicitly empirical best-effort regimes.

\subsection{Compared Methods}
The main tables compare four implemented methods. \emph{Baseline no repair} runs the original source tree and stops at faulty nodes or failed tree links. \emph{Avoid-only re-rooting} succeeds only when the selected root--orientation tree already avoids all failed tree links and places node faults as non-forwarding boundary leaves, so no component repair edge is used. \emph{Fixed-source component repair} keeps the source fixed at $0$ and evaluates the 15 EJ orientations exactly. \emph{Hybrid repair} ranks root--orientation candidates, fully repairs the best ranked candidates plus the mandatory healthy-source fallback, and returns the best successful repaired tree under the lexicographic score.

A fifth external/general baseline is also reported: \emph{global BFS rebuild}. This baseline deletes faulty nodes and failed links, runs BFS from the original source in the healthy EJ graph, and constructs a new shortest-path broadcast tree. It is not a local repair method and it does not preserve the damaged tree components. Its purpose is to show the tradeoff between global reachability/depth and the number of parent-pointer changes relative to the original broadcast tree. In the audit, BFS rebuild is applied to every validation row as an external graph-reconfiguration baseline.

\subsection{Cap Sensitivity and Near-Miss Audits}
The hybrid selector in the validation fully repairs the best 64 ranked pairs plus the mandatory healthy-source fallback pairs. This cap controls source-search breadth only; for every selected candidate, the component repair is still exact and uses $c-1$ edges when the component graph is connected. To avoid making cap 64 look arbitrary, a targeted cap-sensitivity audit compares caps $16$, $64$, and $128$ on the most stressful higher-order regimes: $t\in\{25,50\}$, scenarios $3$ nodes plus $2$ links and $5$ nodes, and critical/close placements. A near-miss audit also records cap pressure, maximum component count, maximum repair edges, maximum repaired-depth overhead, and the number of avoid-only failures recovered by component repair.

\begin{table}[H]
\centering
\caption{Recovery by claim regime. The run used 1000 trials per exact $(t,\mathrm{scenario},\mathrm{mode})$ setting, 10 workers, exact 15-orientation fixed-source comparison, and a ranked hybrid repair cap of 64 fully repaired pairs plus mandatory source fallback pairs.}
\label{tab:claim-regime-v4}
\begin{adjustbox}{max width=\textwidth}
\begin{tabular}{lrrrrrr}
\toprule
Regime & Trials & Baseline (\%) & Avoid-only (\%) & Fixed repair (\%) & Hybrid (\%) & Hybrid failures\\
\midrule
1--2 nodes & 40,000 & 2.415 & 100.000 & 100.000 & 100.000 & 0 \\
1 link & 20,000 & 33.105 & 100.000 & 100.000 & 100.000 & 0 \\
1 node + 1 link & 20,000 & 2.150 & 100.000 & 100.000 & 100.000 & 0 \\
1 node + multi-link & 20,000 & 1.555 & 100.000 & 100.000 & 100.000 & 0 \\
2-node mixed & 40,000 & 0.718 & 99.922 & 100.000 & 100.000 & 0 \\
multi-link & 60,000 & 14.250 & 100.000 & 100.000 & 100.000 & 0 \\
higher-order heuristic & 40,000 & 0.215 & 44.557 & 100.000 & 100.000 & 0 \\
transient 1 link & 20,000 & 33.310 & 100.000 & 100.000 & 100.000 & 0 \\
\bottomrule
\end{tabular}
\end{adjustbox}
\end{table}

Table~\ref{tab:claim-regime-v4} gives the main claim-regime summary. The proposed hybrid method recovered every one of the 260,000 validation trials. The fixed-source component-repair comparator also recovered every trial in this run, but it used more external repair edges and produced larger repaired depth. Avoid-only re-rooting was perfect in the deterministic one/two-node, single-link, one-node/link, and multi-link rows, but dropped in the higher-order heuristic row because a common boundary root is not guaranteed for five-node clustered placements.

\begin{table}[H]
\centering
\caption{Average external repair-edge reduction. Depth values are averaged over $t\in\{10,25,50,100,200\}$ and the four placement modes. Rows with zero hybrid edges are theorem-backed deterministic regimes; the two higher-order rows are empirical stress regimes governed by healthy-graph connectivity.}
\label{tab:repair-reduction-v4}
\begin{adjustbox}{max width=\textwidth}
\begin{tabular}{lrrrrr}
\toprule
Scenario & Fixed edges & Hybrid edges & Reduction (\%) & Fixed depth & Hybrid depth\\
\midrule
1 node & 0.9651 & 0.0000 & 100.00 & 95.867 & 77.000 \\
2 nodes & 2.0618 & 0.0000 & 100.00 & 110.319 & 77.000 \\
1 link & 0.1096 & 0.0000 & 100.00 & 78.063 & 77.000 \\
2 links & 0.2734 & 0.0000 & 100.00 & 80.048 & 77.000 \\
3 links & 0.4485 & 0.0000 & 100.00 & 82.420 & 77.000 \\
5 links & 0.8391 & 0.0000 & 100.00 & 88.153 & 77.000 \\
1 node + 1 link & 1.1279 & 0.0000 & 100.00 & 97.261 & 77.000 \\
1 node + 2 links & 1.3544 & 0.0000 & 100.00 & 99.656 & 77.000 \\
2 nodes + 1 link & 2.2365 & 0.0000 & 100.00 & 113.229 & 77.000 \\
2 nodes + 2 links & 2.4846 & 0.0016 & 99.94 & 116.210 & 77.002 \\
3 nodes + 2 links & 3.6117 & 0.3838 & 89.37 & 128.375 & 77.131 \\
5 nodes & 5.3740 & 1.3259 & 75.33 & 140.511 & 80.559 \\
transient 1 link & 0.1052 & 0.0000 & 100.00 & 78.027 & 77.000 \\
\bottomrule
\end{tabular}
\end{adjustbox}

\footnotesize Rows 1--10 and the transient single-link row are theorem-backed by boundary re-rooting, link exclusion, and deterministic single-link repair; zero hybrid repair edges and depth $77.000$ reflect the uniform diameter average $\bar t=(10+25+50+100+200)/5=77$ together with the deterministic-depth corollary. Rows 11--12 are higher-order empirical regimes; recovery is governed by the healthy-graph connectivity theorem and repair-edge count depends on the selected ranked/capped candidate.
\end{table}

Table~\ref{tab:repair-reduction-v4} shows the main structural advantage of the hybrid method. In all deterministic one/two-node and link-only regimes, the hybrid selected a root/orientation with zero average external repair edges. In mixed two-node/two-link cases, the average repair count was only 0.0016, a 99.94\% reduction from the exact fixed-source comparator. In higher-order regimes, the method remained effective but no longer zero-repair: three nodes plus two links required 0.3838 repair edges on average, and five-node cases required 1.3259 repair edges on average. These are the regimes where the paper states best-effort empirical recovery rather than a deterministic universal theorem.

\begin{table}[H]
\centering
\caption{Aggregate scaling by EJ diameter $t$ across all tested scenarios and placement modes.}
\label{tab:scaling-v4}
\begin{adjustbox}{max width=\textwidth}
\begin{tabular}{rrrrrrrrr}
\toprule
$t$ & Trials & Baseline (\%) & Avoid-only (\%) & Hybrid (\%) & Fixed edges & Hybrid edges & Fixed depth & Hybrid depth\\
\midrule
10 & 52,000 & 10.844 & 93.696 & 100.000 & 1.8134 & 0.0839 & 14.260 & 10.050 \\
25 & 52,000 & 9.146 & 91.423 & 100.000 & 1.6348 & 0.1272 & 33.671 & 25.114 \\
50 & 52,000 & 8.938 & 90.848 & 100.000 & 1.5692 & 0.1431 & 65.910 & 50.196 \\
100 & 52,000 & 8.527 & 90.806 & 100.000 & 1.5349 & 0.1489 & 130.330 & 100.344 \\
200 & 52,000 & 8.531 & 90.519 & 100.000 & 1.5213 & 0.1550 & 258.960 & 200.715 \\
\bottomrule
\end{tabular}
\end{adjustbox}
\end{table}

Table~\ref{tab:scaling-v4} aggregates by diameter. Across all tested scenarios, the hybrid success rate stayed at 100\% from $t=10$ to $t=200$. Average hybrid depth stayed close to the network diameter, whereas fixed-source repaired depth was much larger because a fixed source leaves faults on internal forwarding trunks. The average hybrid repair count remained below 0.16 even at $t=200$ in this validation campaign.

Table~\ref{tab:bfs-comparison-v6} compares the hybrid method with a global BFS rebuild baseline aggregated by claim regime. BFS rebuild is an intentionally strong external reachability/depth baseline: it succeeds whenever the healthy graph is connected and builds a shortest-path tree from the original source. Its cost is that it replaces the broadcast structure globally. The changed-parent column is an upper-bound proxy for the number of parent rules that may differ from the original tree. Hybrid repair instead keeps healthy tree components and adds only a small number of component-crossing rules.

\begin{table}[H]
\centering
\caption{External global-BFS rebuild comparison on the same 260,000 validation rows. BFS rebuild restores reachability but may replace essentially the whole broadcast tree; hybrid repair preserves selected healthy components and adds few external crossing rules.}
\label{tab:bfs-comparison-v6}
\begin{adjustbox}{max width=\textwidth}
\begin{tabular}{lrrrrrrr}
\toprule
Claim regime & Trials & BFS succ. (\%) & BFS depth & BFS parent-change proxy & Hybrid edges & Hybrid depth & Fixed edges \\
\midrule
One/two node & 40,000 & 100.000 & 77.439 & 32,164.5 & 0.000 & 77.000 & 1.513 \\
Single link & 20,000 & 100.000 & 77.110 & 32,166.0 & 0.000 & 77.000 & 0.110 \\
One node + one link & 20,000 & 100.000 & 77.437 & 32,165.0 & 0.000 & 77.000 & 1.128 \\
One node + multi-link & 20,000 & 100.000 & 77.462 & 32,165.0 & 0.000 & 77.000 & 1.354 \\
Multi-link & 60,000 & 100.000 & 77.241 & 32,166.0 & 0.000 & 77.000 & 0.520 \\
Two-node mixed & 40,000 & 100.000 & 77.517 & 32,164.0 & 0.001 & 77.001 & 2.361 \\
Higher-order heuristic & 40,000 & 100.000 & 77.594 & 32,162.0 & 0.855 & 78.845 & 4.493 \\
Transient single link & 20,000 & 100.000 & 77.105 & 32,166.0 & 0.000 & 77.000 & 0.105 \\
\bottomrule
\end{tabular}
\end{adjustbox}
\end{table}

\begin{table}[H]
\centering
\caption{Reproducibility and selector settings.}
\label{tab:repro-settings}
\begin{tabular}{lr}
\toprule
Item & Value \\
\midrule
Raw trials & 260,000 \\
Diameters & 10, 25, 50, 100, 200 \\
Placement modes & random, near, critical, close \\
Workers used & 10 \\
Root cap & 20,000 \\
Hybrid full-repair cap & 64 \\
Hybrid orientation cap & 15 \\
Fixed-source orientation cap & 15 \\
Hybrid failures & 0 \\
Worker errors & 0 \\
Elapsed time & 94.42 h \\
Rate & 0.765 trials/s \\
\bottomrule
\end{tabular}
\end{table}

\begin{table}[H]
\centering
\caption{Targeted cap-sensitivity audit on higher-order stress cases; 160 trial-units per cap.}
\label{tab:cap-sensitivity-v6}
\begin{tabular}{rrrrr}
\toprule
Cap & Success (\%) & Avg. edges & Depth ovh. & Hits \\
\midrule
16 & 100.000 & 0.5938 & 0.8375 & 2 \\
64 & 100.000 & 0.5938 & 0.7500 & 0 \\
128 & 100.000 & 0.5938 & 0.7500 & 0 \\
\bottomrule
\end{tabular}
\end{table}

\begin{table}[H]
\centering
\caption{Near-miss audit aggregated by claim regime. Cap pressure means the selected run touched the full-repair cap proxy; it is not a hybrid failure.}
\label{tab:near-miss-v6}
\begin{adjustbox}{max width=\textwidth}
\begin{tabular}{lrrrrrrr}
\toprule
Claim regime & Trials & Hybrid failures & Avoid-only fail, hybrid recovered & Max comp. & Max repair & Max depth over $t$ & Cap pressure \\
\midrule
One/two node & 40,000 & 0 & 0 & 1 & 0 & 0 & 0 \\
Single link & 20,000 & 0 & 0 & 1 & 0 & 0 & 0 \\
One node + one link & 20,000 & 0 & 0 & 1 & 0 & 0 & 0 \\
One node + multi-link & 20,000 & 0 & 0 & 1 & 0 & 0 & 0 \\
Multi-link & 60,000 & 0 & 0 & 1 & 0 & 0 & 0 \\
Two-node mixed & 40,000 & 0 & 31 & 2 & 1 & 25 & 0 \\
Higher-order heuristic & 40,000 & 0 & 22,177 & 4 & 3 & 174 & 19 \\
Transient single link & 20,000 & 0 & 0 & 1 & 0 & 0 & 0 \\
\bottomrule
\end{tabular}
\end{adjustbox}
\end{table}

The root-cause output file contained no hybrid failures. Thus no disconnected-component or relocation-failure breakdown appears in this validation campaign. The manuscript nevertheless keeps the theorem statements conditional for high-order regimes, because local six-neighbor or six-link isolating cuts make universal recovery impossible. The near-miss audit shows that the only meaningful cap-pressure events occur in the higher-order heuristic regime, while theorem-backed regimes remain zero-repair and zero-pressure in this dataset. The connectivity-threshold audit in Table~\ref{tab:threshold-v15} then tests the exact boundary condition of Theorem~3.

\subsection{Connectivity-Threshold Audit}
Table~\ref{tab:threshold-v15} tests the boundary condition of Theorem~3 directly. For each tested diameter $t\in\{10,25,50\}$ and fault loading $k$, the healthy graph $G'=\Ht-\Fv$ is checked by breadth-first search after uniform random node faults are sampled. The two right columns record the fraction of trials where $G'$ is connected and the fraction of trials whose success is implied by Theorem~3. By Theorem~3, these two quantities are equal for every fault level and network size. The table validates the exact transition from high success at low fault loading to zero success when $G'$ becomes fully disconnected, with no exceptions in either direction across 3,000 trials. This audit checks $G'$-connectivity directly and uses Theorem~3 to classify predicted recovery, so it does not require rerunning the full capped selector.

\begin{table}[H]
\centering
\caption{Connectivity-threshold audit for uniform random node faults. The audit deliberately increases fault loading until the remaining healthy graph begins to disconnect. The two right columns are identical because Theorem~3 gives the exact recovery condition: predicted hybrid success is exactly the event that $G'$ is connected.}
\label{tab:threshold-v15}
\renewcommand{\tabcolsep}{4pt}
\footnotesize
\begin{tabular}{rrrrrr}
\toprule
$t$ & $N$ & Node faults $k$ & Trials & $G'$ connected (\%) & Theorem~3 predicted success (\%)\\
\midrule
10 & 331 & 40 & 200 & 100.0 & 100.0\\
10 & 331 & 80 & 200 & 94.5 & 94.5\\
10 & 331 & 100 & 200 & 86.0 & 86.0\\
10 & 331 & 120 & 200 & 53.0 & 53.0\\
10 & 331 & 140 & 200 & 13.5 & 13.5\\
25 & 1951 & 200 & 200 & 100.0 & 100.0\\
25 & 1951 & 400 & 200 & 84.5 & 84.5\\
25 & 1951 & 500 & 200 & 62.0 & 62.0\\
25 & 1951 & 600 & 200 & 24.5 & 24.5\\
25 & 1951 & 800 & 200 & 0.0 & 0.0\\
50 & 7651 & 500 & 200 & 100.0 & 100.0\\
50 & 7651 & 1000 & 200 & 96.0 & 96.0\\
50 & 7651 & 1500 & 200 & 62.5 & 62.5\\
50 & 7651 & 2000 & 200 & 7.5 & 7.5\\
50 & 7651 & 2500 & 200 & 0.0 & 0.0\\
\bottomrule
\end{tabular}
\end{table}

\subsection{Interpretation}
The validation results support three points. First, the implementation realizes the proved depth guarantees on exhaustive and stress-tested finite instances. Second, the certified component-repair procedure realizes the predicted $c-1$ edge count; in the exact validation, some two-fault cases select orientations with five components and therefore four repair edges. Third, although the proof guarantees depth $t+2$ for all two-fault placements, the selected 15-orientation EJ-MOEM candidate is often stronger: overhead two appears only in the small $t=3$ exhaustive case, while all larger exhaustive, structured, and random tests observed maximum overhead at most one.

\section{Discussion}
\subsection{Why the Hybrid Objective Matters}
The validation data show that fixed-source component repair can often recover the same trials, but it leaves substantially more fragmentation and larger depth. The hybrid method changes the tree before repair. A fault that is internal for the original source can become a boundary leaf for a re-rooted source, and a failed link used by one orientation can be avoided by another root--orientation pair. Thus the main structural benefit is not merely higher success, but lower exceptional repair state and shallower repaired trees.

\subsection{Comparator Fairness}
The fixed-source comparator evaluates all 15 EJ orientations, so it is an exact fixed-source 15-orientation comparator. This is a strong internal ablation because it gives the fixed source the same orientation family and exact component-repair engine. The paper also includes global BFS rebuild as a general external graph-reconfiguration baseline. BFS rebuild has 100\% success in the validation rows and near-diameter depth, but the parent-change proxy is on the order of $N$ because the tree is rebuilt globally. The proposed method should therefore be judged on a different axis: it preserves healthy components of a selected diameter-level tree and minimizes the number of new component-crossing repair rules for that selected tree.

For a NoC implementation, each component-crossing repair edge corresponds to one exceptional forwarding rule inserted at a component boundary node. The median hybrid repair-edge count is zero in all theorem-backed deterministic regimes, so the common case requires no additional routing-table entries beyond activating the selected root--orientation tree. In higher-order regimes, the five-node row averages 1.3259 repair edges in the current 260,000-trial dataset, meaning fewer than two additional component-boundary forwarding entries per broadcast instance on average. This contrasts with global BFS rebuild, whose parent-change proxy is $O(N)$ and would require changing a large fraction of the broadcast parent map.

From a latency perspective, each component-crossing repair edge represents one additional forwarding hop at a component boundary. In the median case, the theorem-backed deterministic regimes have zero repair edges, so the repaired broadcast latency equals the original diameter-$t$ broadcast. In the higher-order five-node row, the average repair count is 1.3259; even if each repair edge is charged as one additional hop, this is less than two extra hops on average and below 1\% of the $t=200$ diameter scale. This is a structural latency metric, not a substitute for a full cycle-accurate NoC throughput evaluation.

Deadlock freedom follows from the repaired structure being a tree. The fault-free broadcast tree has no cyclic channel dependency because every packet follows an acyclic parent--child structure. Each component-crossing repair edge is directed from a parent-side component to a child-side component and is used once per broadcast event. Since the component-level repair is also a tree, adding these directed crossings cannot create a cycle in the broadcast dependency graph. Thus the repaired broadcast inherits deadlock freedom from acyclicity, independent of the underlying unicast routing protocol used outside this scheduled broadcast.

The hybrid selector uses a ranked/capped root search: it cheaply ranks many root--orientation pairs and fully repairs the best 64 ranked pairs together with the mandatory healthy-source fallback pairs in the main validation table. In theorem-backed one/two-node regimes, the root cap is non-binding: exact boundary generation returns at most $6t$ roots per fault, and $6t\cdot15=90t\le18000$ orientation candidates at $t=200$, below the root cap of 20000. At least one boundary-root candidate exists by the re-rooting theorem, and it is lexicographically optimal when it produces zero repair edges. For higher-order regimes, the cap is an engineering selector. The cap is reported explicitly in Table~\ref{tab:repro-settings}, and the cap-sensitivity audit in Table~\ref{tab:cap-sensitivity-v6} shows that cap 64 matches cap 128 in success and average repair-edge count on the audited stress cases, with zero cap hits for both settings. The paper therefore claims exact repair optimality for every fully repaired selected candidate, not global optimality over all possible roots.

\subsection{Limits and Reviewer-Safe Claims}
The deterministic claims are: one/two-node boundary re-rooting exists; one failed link is always avoided or repaired by at most one crossing edge for any selected tree; and connected component graphs are repaired with exactly $c-1$ edges. The empirical claim is: under the tested distributions and cap settings, the hybrid method recovered all 260,000 trials and reduced repair edges substantially compared with fixed-source repair. The paper does not claim universal high-order recovery.

\subsection{Threats to Validity}
The validation is graph-level rather than a full router microarchitectural study. It measures broadcast reachability, component repair, repair-edge count, and repaired depth, but it does not claim saturation throughput, calibrated power, virtual-channel behavior, or background-traffic performance. A cycle-accurate EJ NoC implementation and background-traffic evaluation are outside the scope of this paper; the present validation targets graph-level correctness, repair-edge optimality, and forwarding-state reduction. The hybrid root search is capped, so the reported hybrid candidate is the best among the fully repaired ranked candidates rather than a global optimum over all roots. These limitations do not affect the deterministic theorems: they affect only the interpretation of empirical high-order and performance-oriented claims.

\section{Complexity}
Let $N=3t^2+3t+1$, let $m=|E|=3N$ because the EJ graph has degree six, let $q_V=|\Fv|$, and let $q_E=|\Fe|$. For a fixed root--orientation pair $(r,\theta)$, failed-link exclusion costs $O(q_E)$ by Proposition~1. If full repair is invoked, pruning, component identification, component-graph construction, shallowest-entry crossing selection, and repaired-depth validation cost $O(N+m)=O(N)$ because the degree is constant.

The cheap-ranking phase evaluates $R\le 20000$ candidate roots with $Q\le 15$ orientations. The leaf score $L(r)$ costs $O(q_V)$ through the EJ distance formula, and the failed-tree-link count costs $O(q_E)$ per orientation. Thus the cheap-ranking cost is
\begin{equation}
O\!\left(R\bigl(q_V+Qq_E\bigr)\right).
\end{equation}
The full-repair phase evaluates $K\le64$ root candidates and $Q_f\le15$ orientations at cost $O(N)$ per orientation, giving
\begin{equation}
O(KQ_fN).
\end{equation}
At $t=200$ and $q_V=q_E=5$, cheap ranking costs on the order of $20000(5+15\cdot5)\approx1.6\times10^6$ primitive coordinate checks, while full repair costs up to $64\cdot15\cdot120601\approx1.16\times10^8$ vertex-level operations. Full repair is therefore the dominant validation cost. The exact fixed-source comparator costs $O(15N)$ per trial; the global-BFS rebuild baseline costs $O(N)$ but replaces $O(N)$ parent pointers, whereas the hybrid repair changes only $O(c-1)$ component-crossing entries for the selected tree.

For $|\Fv|\ge3$, candidate roots are ranked over at most $N-1$ healthy nodes. A full sort costs $O(N\log N)$ time and $O(N)$ space, but only the top $R=20000$ candidates are retained; a heap-select implementation costs $O(N+R\log N)$ time and $O(R)$ space. At $t=200$, $R\log_2N\approx 20000\cdot17=340000$ comparisons, dominated by the full-repair phase. The retained candidate list requires $R\lceil\log_2 N\rceil\approx340000$ bits, about 42 KB.

For an online implementation with precomputed tables, the common bounded-fault case is much smaller than the offline audit. For $|\Fv|\le2$ and $|\Fe|\le2$ at $t=200$, exact boundary generation produces at most $6t=1200$ root offsets per fault, and the link-exclusion check costs $O(|\Fe|)$ parent comparisons per candidate. A selected candidate needs one $O(N)$ component pass only when a residual cut remains. A conservative operation count is therefore about $1200\cdot4+120601\approx1.25\times10^5$ primitive operations, or roughly $125~\mu$s at a 1 ns operation budget. This is an order-of-magnitude software-defined reconfiguration estimate, not a cycle-accurate router claim. The 94.42-hour runtime reported in Table~\ref{tab:repro-settings} is an offline validation/proof-auditing cost over 260,000 trials with exact fixed-source comparison; it is not a per-broadcast NoC latency~\cite{DallyTowles2004,Kliazovich2013}.

The dominant precomputed storage is the orientation parent table. Absolute parent labels require $15N\lceil\log_2 N\rceil$ bits. At $t=200$, $N=120601$ and $\lceil\log_2 N\rceil=17$, giving about 30.75 Mbits, or 3.84 MB. Direction-offset encoding stores each parent as one of six directions and requires $15N\lceil\log_2 6\rceil$ bits, about 0.68 MB at $t=200$. Centralized in a dedicated fault-management unit, this is a modest global table. If distributed, each router tile need only store its own parent direction for each of the 15 orientations, requiring $15\lceil\log_2 6\rceil\approx45$ bits per tile, not the full table. Boundary candidate tables are $O(t)$ and link-exclusion uses no large table.

\section{Data and Reproducibility}
The validation drivers emit 260,000 raw trials, repair-edge summaries, timing data, root-cause logs, full-BFS rebuild audits over all raw rows, near-miss audits, cap-sensitivity summaries, and connectivity-threshold tables as CSV/JSON files and package them into ZIP archives. The archive records all reproduction settings, including root cap, full-repair cap, orientation caps, worker count, placement modes, and random seed.

\section{Conclusion}
This paper introduced a re-rooting-assisted edge-minimum runtime repair framework for dense EJ broadcast networks under node faults, link faults, mixed faults, and runtime-discovered single-link faults. The central object is the selected triple $(r,\theta,\Kcomp_{r,\theta})$.

EJ boundary re-rooting supplies deterministic one/two-node source relocation, EJ link exclusion gives a constant-time failed-link test, and component repair gives the exact $c-1$ edge-minimum certificate whenever the component graph is connected. Theorem~3 establishes the necessary and sufficient recovery condition across all fault orders: the method succeeds exactly when the healthy EJ graph $G'=H_t-F_V-F_E$ is connected. Lemma~4 gives $D_{r,\theta}\le 2t+1$ under the shallowest-layer entry rule enforced in Algorithm~1.

The 260,000-trial validation campaign shows 100\% hybrid recovery and strong repair-edge reduction compared with exact fixed-source 15-orientation repair. The BFS rebuild, near-miss, cap-sensitivity, and connectivity-threshold audits clarify the tradeoff between reachability, forwarding-state changes, selector pressure, and the exact boundary condition of Theorem~3. The artifact workflow regenerates all tables from raw CSV outputs without changing the theory or manuscript structure.

\end{document}